\newtheorem{corollary}{Corollary}
\newtheorem{proof}{Proof}
\def \F {{}_2F_1}
\begin{document}

\baselineskip 4.2ex

\title{Cooperative Secure Transmission by Exploiting Social Ties in Random Networks
\thanks{This work was partially presented at the IEEE Global Telecommunications Conference Workshop on Trusted Communications with Physical Layer Security (Globecom'17 - Workshop - TCPLS), Singapore, Dec. 2017 \cite{Xu2017}.
	\emph{(Corresponding author: Hui-Ming Wang).}
}
\thanks{$^\dagger$ The authors are with the
School of Electronics and Information Engineering, and also with the Ministry
of Education Key Lab for Intelligent Networks and Network Security, Xi'an Jiaotong
University, Xi'an, 710049, Shaanxi, P. R. China. Email: {\tt
ymxu29@outlook.com, xjbswhm@gmail.com, xjtu-huangkw@outlook.com}.
}
\thanks{$^\ddagger$ The author is with the Department of Electrical and Computer Engineering, University of Houston, Houston, TX 77004, USA. Email: {\tt zhan2@uh.edu}.}
\thanks{$^\S$ The author is with the School of Engineering, Nazarbayev University, Astana 010000, Kazakhstan. Email: {\tt theodoros.tsiftsis@nu.edu.kz}.}
\author{  Hui-Ming Wang$^\dagger$, \emph{Senior Member, IEEE}, \hspace{0.05in} Yiming Xu$^\dagger$, \hspace{0.05in} Ke-Wen Huang$^\dagger$, \hspace{0.05in} Zhu Han$^\ddagger$, \emph{Fellow, IEEE}, \hspace{0.05in} Theodoros A. Tsiftsis$^\S$, \emph{Senior Member, IEEE}
}
\vspace{-2mm}
}

\maketitle

\begin{abstract}
Social awareness and social ties are becoming increasingly popular with emerging mobile and handheld devices. Social trust degree describing the strength of the social ties has drawn lots of research interests in many fields in wireless communications, such as resource sharing, cooperative communication and so on. In this paper, we propose a hybrid cooperative beamforming and jamming scheme to secure communication based on the social trust degree under a stochastic geometry framework. The friendly nodes are categorized into relays and jammers according to their locations and social trust degrees with the source node. We aim to analyze the involved connection outage probability (COP) and secrecy outage probability (SOP) of the performance in the networks. To achieve this target, we propose a double Gamma ratio (DGR) approach through Gamma approximation. Based on this, the COP and SOP are tractably obtained in closed-form. We further consider the SOP in the presence of Poisson Point Process (PPP) distributed eavesdroppers and derive an upper bound. The simulation results verify our theoretical findings, and validate that the social trust degree has dramatic influences on the security performance in the networks.
\end{abstract}

\begin{IEEEkeywords}
Social ties, physical layer security, Gamma approximation, stochastic geometry, connection outage, secrecy outage.
\end{IEEEkeywords}

\IEEEpeerreviewmaketitle

\newpage
\section{Introduction}
Nowadays, social ties have brought extensive influences among humankind. More and more people are actively involved in online social interactions \cite{SurveySocial1}, \cite{SocialTie}, and hence social ties among people are extensively broadened and significantly enhanced \cite{SocialTie4}. The so-called social ties are usually defined as the social relationships between individuals \cite{DefineSocialTie}, such as kinship, colleague relationships, friendship, acquaintance, and so on \cite{SocialAware2}. The social trust degree of the social tie is the most basic and fundamental notion which characterizes the strength of two individuals relating to each other \cite{SocialAware}. According to \cite{MeasureSocial}, ties have specific trust degree values describing the strength (i.e., from enmity to kinship) between the users. The notions of social ties and social trust degrees have drawn wide attentions of researchers in various fields, including mobile social networks, wireless network communications, and so on. For instance, social ties have also been studied for cooperative communications \cite{D2DSocial3}--\cite{RelaySocial3}. The authors in \cite{D2DSocial3} investigated the joint social-position relationship based cooperation (JSPC) scheme and developed a partner selection algorithm. An optimal social-aware relay selection strategy was proposed in \cite{RelaySocial2} to maximize the capacity of the network. An optimal transmission beamformer design was considered in \cite{RelaySocial3} based on the trust degrees to achieve a target rate in a multi-input single-output cooperative communication network.
These existing work concentrates on efficiency and capacity analysis in these networks. However, the security and privacy of information is a significant issue in social awareness networks. Due to the openness of wireless communications, the leakage of information is a serious problem.

On the other hand, physical Layer Security (PLS) approaches have drawn considerable attention during the past decade to protect the confidentiality of wireless communications. Wyner's seminal research in \cite{Wyner} introduced the concept of the wiretap channel and secrecy capacity, and established a basic theory for PLS. According to the Wyner's theory, a positive secrecy capacity exists if the channel quality of the legitimate receiver is better than that of the eavesdropper. To achieve this target, various PLS communication technologies have been proposed, among which the multiple-user cooperation technology has been studied intensively.
As indicated by the survey paper \cite{BossTutarial}, the cooperative beamforming (CB) and the cooperative jamming (CJ) are two effective methods to improve PLS.
Out of a bunch of friendly nodes in the network, some nodes are selected as relay nodes and other as jammers. Then, relay nodes exploit CB \cite{CooBeamforming1}--\cite{CooBeamforming5} to assist enhancing the channel quantity of the legitimate users, while jammers utilize the CJ \cite{Jam3}, \cite{Jam5} to degrade that of the eavesdroppers.
The friendly nodes in the cooperative networks have been categorized into relays and jammers in these existing works. However, most of them assumed that the relays or jammers are selected and assigned the roles they act (relay or jammer), but did not discuss how the assignment of the roles was made.

Motivated by the above researches, we observe that the social trust degree may play an important role in cooperative secure communications. The social ties of users reflect their willingness to share resources in order to help secure communications. The users with high social trust degrees usually have strong ties and are willing to share resources to help each other for secret communications. It is reasonable that the strong-tie nodes are more likely to offer communication links than the weak-tie ones for cooperation. Particularly, in secure cooperative communications, two nodes with high social trust degrees may have a high probability to establish connections, and to decode or retransmit the confidential messages without leaking them to potential eavesdroppers. Based on the above background, the social trust degrees among users can be utilized to assist cooperative communications for PLS. Motivated by these observations, in this paper, we propose a cooperative relay and jamming scheme to secure communication based on the social trust degree. The security performance is investigated by connection outage probability (COP) and secrecy outage probability (SOP) under a stochastic geometry framework.

\subsection{Related Works}
Researches on social ties have been carried out in many aspects to enhance the efficiency of wireless communications \cite{SocialTie}, \cite{D2DSocial}--\cite{AdHocSocial2}. Utilizing of social ties has been discussed in \cite{SocialTie}, \cite{D2DSocial}--\cite{D2DSocial5} to enhance the performance of device-to-device (D2D) communications. Abundant frameworks and approaches based on social ties such as the coalitional game-theoretic framework \cite{SocialTie}, the Indian Buffet Process approach \cite{D2DSocial}, several resource allocation policies \cite{D2DSocial4,D2DSocial5}, and so on, were proposed to optimize the traffic offloading and improve the system capacity. The efficiency and capacity of wireless ad hoc networks have been improved by exploiting the social ties among users \cite{AdHocSocial}, \cite{AdHocSocial2}.

In cooperative PLS, to achieve a larger secrecy rate, various CB and CJ schemes have been proposed in \cite{JointInAF}--\cite{Hybird2}. The joint cooperative jamming and beamforming schemes were proposed in \cite{JointInAF}--\cite{JointRobustAF}, and were further developed to a destination assisted scheme in \cite{JointPositionAF}. Furthermore, the joint CB and CJ schemes were investigated in hybird networks \cite{HybirdRelay,Hybird2}.

So far, there are few works to study social ties among users in cooperative communications for PLS enhancement \cite{SecrecySocial1}--\cite{SecrecySocial4}.
Zheng \textit{et al.} \cite{SecrecySocial1} studied the secrecy rate and the secrecy throughput under a multi-hop relay scheme using the average source-destination distance based on social ties.
To further enhance security,
both \cite{SecrecySocial2} and \cite{SecrecySocial3} proposed cooperative jamming schemes based on social ties.
Tang \textit{et al.} \cite{SecrecySocial2} discussed the SOP of a source-destination pair based cooperative jamming game. A jammer selection scheme based on mobility-impacted social interactions was proposed in \cite{SecrecySocial3} to maximize the worst-case ergodic secrecy rate. When the relays may be potential eavesdroppers according to their social trust degrees,
a cooperative communication strategy was presented in \cite{SecrecySocial4} to maximize the secrecy rate.
We note that in these works, the cooperative nodes are either relay nodes or jammers, and a joint scheme is missing, which means that the secrecy performance can be improved further by improving the cooperative strategy. Moreover, the social trust degree is merely applied when the relays or jammers have already been chosen, i.e., the social trust degree has not been exploited to determine which role of each cooperative node should be categorized.

\subsection{Our Work and Contributions}
In this paper, we propose a hybrid cooperative relay and jamming scheme exploiting social ties to secure wireless communications in a random cooperative network. The friendly nodes are categorized into relays and jammers according to their locations and social trust degrees with the source node. We analyze the COP and SOP to evaluate the security performance under a stochastic geometry framework.
To the best of our knowledge, this is the first paper that applies the social trust degree into cooperative node categorization and hybrid cooperative secrecy communications.
Our contributions are summarized as follows:

\textbf{i)}
We propose a social trust degree based hybrid cooperative beamforming and jamming scheme to secure a wireless transmission under a stochastic geometry framework, wherein the cooperative nodes are spatially distributed in a two dimensional plane following a Poisson Point Process (PPP). According to the social trust degrees of the source, the cooperative nodes are categorized into relays and jammers. In general, the hybrid cooperative scheme is distributed with a low cooperative overhead.

\textbf{ii)}
A comprehensive COP and SOP analysis is performed to evaluate the performance of the proposed scheme. To facilitate convenient results with a sufficient accuracy, we propose a Gamma approximation method and a double Gamma ratio (DGR) approach to provide closed-form expressions of the COP and the SOP. In terms of the derivation of parameters utilizing Gamma approximation method, a
flabellate annulus approximation method is also proposed to simplify the complicated integral calculations over an irregular pattern.

\textbf{iii)}
As an extension, we further investigate the SOP in the presence of independent and homogeneous PPP distributed eavesdroppers, and obtain its upper bound. In the derivation of the SOP with multiple eavesdroppers, three independent PPPs are contained which makes the calculations untractable. In order to obtain an upper bound, the discrete expectation utilizing the law of total probability is employed to approximate the continuous expectation.

\subsection{Organization and Notations}
The reminder of this paper is organized as follows. In Section II, we provide our system model with relay and jammer selection schemes based on the social trust degrees of the source node. In Section III, we propose a DGR approach based on the Gamma approximation method, which provide general formulations for simplifying calculations in our analysis. In Section IV, we investigate the COP in our scheme. In Section V, the SOP with single eavesdropper and PPP distributed eavesdroppers are analyzed, respectively. In Section VI, we provide the numerical results to verify our theoretical analysis and illustrate the performance of the proposed scheme. Finally, Section VII concludes the paper.

We use the following notations in this paper: $(\cdot)^*$, $\|\cdot\|$ and $|\cdot|$ denote conjugate, Euclidean norm, and absolute value, respectively. $\mathbb{P}(\cdot)$ denotes probability. $\mathbb{E}_A[\cdot]$ and $\mathbb{D}_A[\cdot]$ denote mathematical expectation and variance with respect to $A$, respectively. $\mathcal{CN}(\mu, \sigma^2)$ denotes circularly symmetric complex Gaussian distribution with mean $\mu$ and variance $\sigma^2$. exp(1) denotes exponential distribution with mean 1. $\mathcal{A}(x,r)\subset \mathbb{R}^2$ denotes a bi-dimensional disk centered $x$ with radius $r$, and $\mathcal{D}(L_1,L_2)\subset \mathbb{R}^2$ denotes an annulus with internal radius $L_1$ and external radius $L_2$.

\section{System Model and Problem Description}
As illustrated in Fig.~\ref{SCP_M_PHI}, we consider a wireless network over a finite circle area $\mathcal{A}(o,L_2)\subset \mathbb{R}^2$. This network consists of one source s, one destination $y$, one eavesdropper $z$, and lots of legitimate nodes. Each node in the network works in a half-duplex mode and is equipped with a single antenna. Without loss of generality, we assume that source s locates at the origin $(0,0)$. The source tends to transmit confidential signals to the destination node $y$ without being wiretapped by the eavesdropper. To achieve this target, the source hopes that the legitimate nodes can help complete the secure transmission by node cooperation.

\begin{figure}[!t]
	\centering
	\includegraphics[width=5in]{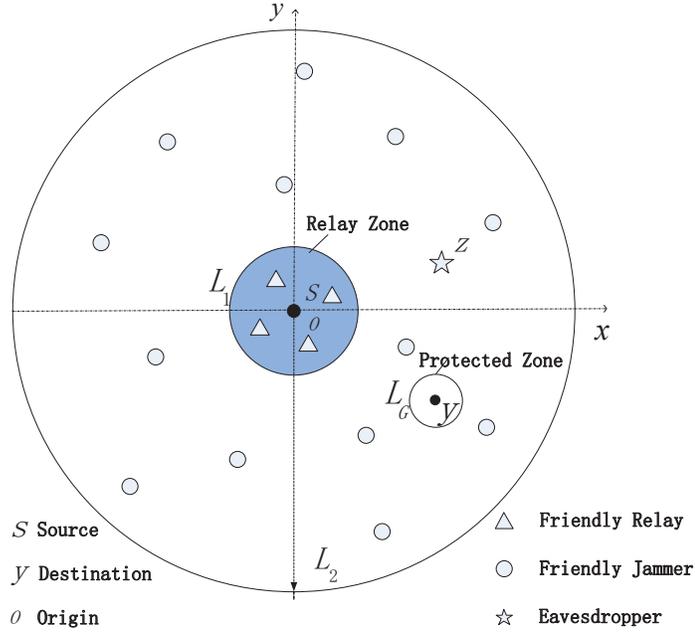}
	\caption{System model.}
	\label{SCP_M_PHI}
\end{figure}

The location of the legitimate node is modeled as a PPP $\Phi$ with density $\lambda$. Each legitimate node has a social tie with the social trust degree of the source, such as family members, colleagues, friends, and so on. The degree between two individuals is usually modeled by using a value in the range $[0,1]$, which is similar to \cite{MeasureSocial}, \cite{D2DSocial3}, \cite{RelaySocial3}, \cite{SecrecySocial4}.\footnote{~A classical way of modelling the social ties between two individuals is weighted graphs which can be referred to \cite{MeasureSocial}, where each node represents one person. The strength of the interactions between individuals is represented by the weights associated with each edge. In further researches, the strength is modelled as a value in the range [0, 1]. } We assume that the social trust degree of each legitimate node is independent identically distributed (i.i.d.), modeled as a uniform random variable (RV) $C$ distributed in $[0,1]$. For a legitimate node, the closer $C$ near to 1, the more the source trust in this node.

\subsection{Social Aware Nodes Selection}
In this paper, we propose a secure cooperative transmission scheme, where these legitimate nodes help increase the secrecy rate via cooperative beamforming and jamming.
We categorize the legitimate nodes as relay nodes, jammers and dummy nodes according to their social trust degrees of the source as well as their locations.

\begin{itemize}
  \item Relay nodes: Intuitively, the relay nodes should have the closest social ties or the most trustworthiness of the source since they may have the permission to  relay the confidential signals or even decode them. Therefore, the most trusted nodes are selected as relay nodes if their social trust degrees are in $[C_1,1]$, where $0<C_1<1$ is a sufficiently large threshold. On the other hand, relay nodes should be close to the source geographically so that the confidential signal broadcasted in the first cooperative phase will not be leaked to the potential eavesdropper. Therefore, we also require the relay nodes be located in $\mathcal{A}(o,L_1)$, i.e., only those legitimate nodes with a distance less than $L_1$ to the source are possible to be relay nodes.
  \item Jammers: Jammers are friendly nodes to the source but are not the most trusted ones. They will not help relay the confidential signal, but transmit artificial interferences to disturb the potential eavesdropper when the confidential signal transmission is ongoing. Therefore, we set the legitimate nodes whose social trust degrees are in $[C_2,C_1]$ as jammers, where $0<C_2<C_1$.
  \item Dummy nodes: Those nodes whose social trust degrees are in $[0,C_2]$ are dummy nodes. Their social connections are not tight with the source. They will not take part in the confidential transmission of the source and do nothing.
\end{itemize}

As a result, those legitimate nodes who will help the secrecy transmissions are divided into relays and jammers according to their social trust degrees and locations, where the locations of the relays in $\mathcal{A}(o,L_1)$ and the jammers in $\mathcal{D}(L_1,L_2)$ are  two independent and homogeneous PPPs, $\Phi_R$ and $\Phi_J$ with intensities $\lambda_R = (1 - C_1)\lambda$ and $\lambda_J = (C_1 - C_2)\lambda$, respectively. Throughout this paper, we will use $x_R\in\Phi_R$ to denote the relay and $x_J\in\Phi_J$ to denote the jammer. Such a system model can be easily found in the scenes like offices, dormitories, labs and so on, where people inside the room may be trusted so that they are reliable and willing to assistant secure transmission, while people outside may help transmit jamming signals to disturb eavesdroppers.

\subsection{Secure Cooperative Transmission Scheme}
We assume that all the relays work in the decode-and-forward (DF) mode. During the first phase of secure cooperative communication, the source $s$ broadcasts confidential signal to the relays in $\mathcal{A}(o,L_1)$. In this phase, we assume that the relays can always decode the received confidential information correctly and the confidential information can be transmitted securely, which is due to the following reasons. Since the relays are required to be located by the source with a short distance less than $L_1$, the source transmits with a sufficiently low power, such that the signal can be correctly decoded by the relays. The signal can not be decoded correctly by the eavesdropper outside $\mathcal{A}(o,L_1)$ due to the large-scale pass loss. Such assumption comes from some scenarios where the source user stays along with several legitimate friendly users in a region, such as colleagues, workmates, roommates in the workplace or labs. When the source user aim to transmit secure information to a destination far away, he or she may select the legitimate friendly users as relays in the region according to their social trust degrees. The security assumption of the source-relay link has also been adopted in \cite{CooBeamforming1}, \cite{CooBeamforming5}, \cite{JointRobustAF}.

During the second phase of cooperation, the relays in $\mathcal{A}(o,L_1)$ forward the correctly decoded confidential information bits to the destination cooperatively. Since the destination is far away from the relays, the risk of being wiretapped in the second phase increases greatly. Therefore, the jammers in $\mathcal{D}(L_1,L_2)$ transmit jamming signals concurrently to confuse the eavesdropper.
In order to protect the signal received at destination node $y$ from being disturbed, the jammers near the destination should keep silence. Therefore, we set a protected zone $\mathcal{A}(y,L_G)$, i.e., a circle of radius $L_G$ centered at $y$, wherein the jammers will keep silence during the second phase  \cite{ProtectedZone}.\footnote{~Destination node $y$ first broadcasts a pilot signal with a pre-designed power. If a jammer received the pilot signal, it is in the protected zone and it will not transmit jamming signals. As a result, the received signal at the destination will be protected. }

\subsection{Channel and Signal Model}
The signal suffers from both small-scale fading and large-scale path loss. We assume that the small-scale fading is quasi-static following the Rayleigh distribution, and the channel coefficient between two nodes located in $x_1$ and $x_2$ is denoted as $H_{x_1,x_2}\thicksim \mathcal{CN}(0,1)$.
The large-scale fading is standard path loss model $d^{-\alpha}$, where $d$ denotes the distance and $\alpha>2$ is the fading power exponent \cite{Channel1}.

Since the relays are required to be located with a short distance by the source, 
the relay nodes that correctly decode the signal $s_0$ will forward it to the destination cooperatively in the second phase.
In this paper, we will consider a distributed cooperative beamforming scheme, where each relay transmits $s_0$ by pre-compensating the phase of the channel $H_{x_R,y}$ and utilizing all its available power. The transmitted signal of each relay node is $s_{x_R} = \frac{\sqrt{P_s}H_{x_R,y}^*}{\|H_{x_R,y}\|}s_0$, where the symbol power is normalized as $E\{|s_0|^2\}=1$ and the transmit power of the relay is $P_R$. We note that such a cooperative beamforming scheme is totally distributed in the sense that each relay do the cooperation with its own channel state information (CSI) instead of the global CSI, so that the overhead is greatly reduced.

Then, the received signal power at destination node $y$ is given by
\begin{align}
T(y) & = \left|\sum_{x_R\in\Phi_R}\frac{\sqrt{P_R}H_{x_R,y}H_{x_R,y}^*}{\|H_{x_R,y}\|}\|x_R - y\|^{-\alpha/2}\right|^2 \nonumber \\
& = \left|\sum_{x_R\in\Phi_R}\sqrt{P_R}\|H_{x_R,y}\|\|x_R - y\|^{-\alpha/2}\right|^2 .\label{Ty}
\end{align}
Similarly, the signal power received by the eavesdropper is given by
\begin{align}
T(z) & = \left|\sum_{x_R\in\Phi_R}\frac{\sqrt{P_R}H_{x_R,z}H_{x_R,y}^*}{\|H_{x_R,y}\|}\|x_R - z\|^{-\alpha/2}\right|^2 .\label{Tz}
\end{align}
When the cooperative beamforming is ongoing, each jammer also transmits an independent Gaussian interference signal to confuse the eavesdropper. The transmit power of the jammer is $P_j$. Since the jamming signals from different jammers are independent, the aggregate interference power at destination $y$ and eavesdropper $z$ are given by
\begin{align}
I(y) = \sum_{x_J\in\mathcal{\overline{D}}}P_jh_{x_J,y}\| x_J - y\|^{-\alpha}, \label{Iy}
\end{align}
and
\begin{align}
I(z) = \sum_{x_J\in\mathcal{\overline{D}}}P_jh_{x_J,z}\| x_J - z\|^{-\alpha}, \label{Iz}
\end{align}
respectively, where $\mathcal{\overline{D}}$ denotes the area $\Phi_J\backslash\mathcal{A}(y,L_G)$, and $h_{x_1,x_2}\thicksim \exp(1)$ is the power fading between locations $x_1$ and $x_2$. Suppose the network is interference-limited so that the ambient noise is negligible. The signal-to-interference ratio (SIR) for destination $y$ is given by
\begin{align}
SIR_y = \frac{\left|\sum_{x_R\in\Phi_R}\sqrt{P_R}\|H_{x_R,y}\|\|x_R - y\|^{-\alpha/2}\right|^2}{\sum_{x_J\in\Phi_J\backslash\mathcal{A}(y,L_G)}P_jh_{x_J,y}\| x_J - y\|^{-\alpha}}, \label{SIRy}
\end{align}
and the SIR for eavesdropper $z$ is given by
\begin{align}
SIR_z = \frac{\left|\sum_{x_R\in\Phi_R}\frac{\sqrt{P_R}H_{x_R,z}H_{x_R,y}^*}{\|H_{x_R,y}\|}\|x_R - z\|^{-\alpha/2}\right|^2}{\sum_{x_J\in\Phi_J\backslash\mathcal{A}(y,L_G)}P_jh_{x_J,z}\| x_J - z\|^{-\alpha}}. \label{SIRz}
\end{align}
In order to simplify our expressions, we define $d_{x,y} \triangleq \|x - y\|$ to denote the distance between the nodes located at $x$ and $y$.

\section{Gamma Approximation and DGR Approach} \label{SectionDGR}
In this paper, we aim to analyze the COP and the SOP in our system according to \eqref{SIRy} and \eqref{SIRz} in Section II. The detailed derivations of the COP and the SOP are provided in Section IV and Section V, respectively. We will show that the expressions of COP and SOP \cite{SOPbyOther}, \cite{SOPbyTXZheng} have a unified formulation as
\begin{align}
\mathcal{P} & = \mathbb{P}\left(SIR<\beta\right) = \mathbb{P}\left(\frac{T}{I}<\beta\right) ,\label{OP}
\end{align}
where $T$ is the signal power, $I$ is the interference power, and $\beta$ is some threshold according to the target performance of the system. Such a formulation is widely used in the calculation of the outage in PLS. In our scheme, $T$ and $I$ have complicated distributions without closed-form expressions of the probability distribution functions (PDF), which makes our analysis quite untractable. To facilitate the analysis, in this section, we propose the following two calculation methods, i.e., the Gamma approximation method \cite{GamaAppr} and a DGR approach. We first introduce the Gamma approximation method.

\subsection{Gamma Approximation}
Gamma approximation is a model approach to approximate the distribution of a RV based on the Gamma distribution, which aims to facilitate simplified and low complexity calculations. We can present simplified expressions for the outage probabilities by using the Gamma approximation. For a RV $\bar{A}$, we use a Gamma RV $A$ with the PDF
\begin{align}
G_A(x_{A};\nu_{A},\theta_{A}) & = \frac{x_{A}^{\nu_{A} - 1}e^{-\frac{x_{A}}{\theta_{A}}}}{\theta_{A}^{\nu_{A}}\Gamma(\nu_{A})} \label{Gama}
\end{align}
to approximate it, where $\Gamma(\nu_{A}) = \int_0^\infty m^{\nu_{A}-1}e^{-m}dm$, and  $\nu_{A}$ and $\theta_{A}$ are derived from the cumulants of  $\bar A$, especially the mean and variance. The $i$-th cumulants $N_{\bar A}^{(i)}$ of a RV $\bar A$ is defined as
\begin{align}
N_{\bar A}^{(i)} & = \frac{d^{i}\mathbb{E}_{\bar A}\left[e^{w\bar a}\right]}{dw^{i}}\Big|_{w = 0} .\label{Eti}
\end{align}
The mean of $\bar A$ denoted as $\mu_{\bar A} = N_{\bar A}^{(1)}$, and the variance of $\bar A$ is $\sigma_{\bar A}^2 = N_{\bar A}^{(2)} - \left(N_{\bar A}^{(1)}\right)^2$. Then the corresponding function of the Gamma distribution in (\ref{Gama}) has the parameters
\begin{align}
\nu_A = \frac{\mu_{\bar A}^2}{\sigma_{\bar A}^2}\quad\mathrm{and}\quad\theta_A = \frac{\sigma_{\bar A}^2}{\mu_{\bar A}} .\label{vt}
\end{align}
Consequently, we can obtain the Gamma approximations $G_T(x_T;\nu_T,\theta_T)$ and $G_I(x_I;\nu_I,\theta_I)$ of $T$ and $I$ in our system model according to \eqref{Gama}-\eqref{vt}, respectively. The details will be provided in the following sections.

Gamma approximation can be applied to approximately model the sum of several variables with special distributions, such as the Rayleigh distribution and so on. The PDF of the sum of these variables is usually complicated to obtain, while the distribution of each variable can be modeled as a special case of the Gamma distribution \cite{GamaAppr}. According to the additivity of the Gamma distribution, the sum of several Gamma variables is still Gamma distributed. Consequently, the sum of such variables can be approximately modeled by a Gamma distribution. In our proposed system model, the PDFs of \eqref{Ty}--\eqref{Iz} are untractable to obtain. Fortunately, each item in the sum expression is a special case of the Gamma distribution. We use Gamma approximation to approximately modeled them, i.e., both the numerator and the denominator in the SIR can be modeled as Gamma variables. Namely, our objective outage probability in \eqref{OP} has the form of a DGR. Next, we will introduce the proposed DGR approach.

\subsection{The DGR Approach}
The DGR approach is to provide a convenient calculation of the cumulative distribution function of the ratio of two Gamma random variables. Given $T\thicksim G_T(x_T;\nu_T,\theta_T)$ and $I\thicksim G_I(x_I;\nu_I,\theta_I)$, we have the following corollary.

\begin{corollary}
The cumulative distribution function (CDF) of a ratio Gamma variables is given by
\begin{align}
\mathbb{P}\left(\frac{T}{I}<\beta\right) & =  1 - \frac{q^{\nu_T}\Gamma(\nu_T + \nu_I)}{\nu_I(q + 1)^{\nu_T + \nu_I}\Gamma(\nu_T)\Gamma(\nu_I)}~\F\left(1,\nu_T + \nu_I;\nu_I + 1;\frac{1}{q + 1}\right),
\end{align}
where $q = \frac{\beta\theta_I}{\theta_T}$, $\beta$ is a threshold, and ${}_2F_1\left(a,b;c;d\right)$ denotes hypergeometric function \cite[Eq. 6.455.1]{TableIntegral}.
\end{corollary}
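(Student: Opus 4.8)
The plan is to compute the CDF directly by conditioning on $I$ and integrating against the Gamma density of $I$. First I would write
\begin{align}
\mathbb{P}\left(\frac{T}{I}<\beta\right) = \int_0^\infty \mathbb{P}\left(T < \beta x_I\right) G_I(x_I;\nu_I,\theta_I)\,dx_I = \int_0^\infty \left[\int_0^{\beta x_I} G_T(x_T;\nu_T,\theta_T)\,dx_T\right] \frac{x_I^{\nu_I-1}e^{-x_I/\theta_I}}{\theta_I^{\nu_I}\Gamma(\nu_I)}\,dx_I. \nonumber
\end{align}
The inner integral is the lower incomplete Gamma function, so $\mathbb{P}(T<\beta x_I) = \frac{\gamma(\nu_T, \beta x_I/\theta_T)}{\Gamma(\nu_T)}$, and equivalently $\mathbb{P}(T\ge \beta x_I) = \frac{\Gamma(\nu_T,\beta x_I/\theta_T)}{\Gamma(\nu_T)}$. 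It will be cleaner to compute the complementary probability $\mathbb{P}(T/I \ge \beta)$, since that produces the $1 - (\cdots)$ form in the statement.

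Next I would substitute the series (or integral) representation of the upper incomplete Gamma function and interchange the order of integration, reducing everything to a single integral of the form $\int_0^\infty x_I^{\nu_T+\nu_I-1} e^{-x_I(\beta/\theta_T + 1/\theta_I)}\,dx_I$ times elementary factors — the key observation being that after pulling out the $x_R$-type integration variable, the remaining $x_I$-integral is again a Gamma integral. Collecting the constants and introducing $q = \beta\theta_I/\theta_T$ should make the prefactor $\frac{q^{\nu_T}\Gamma(\nu_T+\nu_I)}{\nu_I(q+1)^{\nu_T+\nu_I}\Gamma(\nu_T)\Gamma(\nu_I)}$ appear. The residual single integral over $[0,1]$ (after a normalizing substitution that sends the upper limit to $1$ and produces the argument $\frac{1}{q+1}$) is precisely an Euler-type integral representation of $\F$, which I would match against \cite[Eq. 6.455.1]{TableIntegral}; alternatively one can expand $\F$ as its defining hypergeometric series, integrate term by term, and recognize the result.

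The main obstacle I anticipate is purely bookkeeping: correctly tracking the parameters $a=1$, $b=\nu_T+\nu_I$, $c=\nu_I+1$ and the argument $\frac{1}{q+1}$ when matching the residual integral to the tabulated form of $\F$, since \cite[Eq. 6.455.1]{TableIntegral} has a specific normalization with a Gamma-function prefactor that must be reconciled with the $\frac{1}{\nu_I}$ and $(q+1)^{-(\nu_T+\nu_I)}$ factors produced along the way. A convergence check on the interchange of integration and summation (justified by Tonelli, since all integrands are nonnegative) is routine. No deep idea is needed beyond the standard incomplete-Gamma manipulation; the statement is essentially a known identity for the ratio of two independent Gamma variables, specialized and rewritten in hypergeometric form.
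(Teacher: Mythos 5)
Your proposal is correct and follows essentially the same route as the paper: condition on $I$, express the inner probability via the (upper) incomplete Gamma function, and evaluate the resulting single integral against the Gamma density of $I$ using \cite[Eq. 6.455.1]{TableIntegral}, which directly yields the prefactor and the ${}_2F_1$ term with argument $\frac{1}{q+1}$. The only difference is cosmetic — you sketch deriving the tabulated identity by series/interchange rather than citing it outright — and the stray reference to an ``$x_R$-type'' variable is presumably a slip for $x_T$.
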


\begin{proof}
\begin{align}
\mathbb{P}\left(\frac{T}{I}<\beta\right) & = \mathbb{P}\left(T<\beta I\right) \nonumber \\
& \overset{(a)}{=} \mathbb{E}_I\left[\int_0^{\beta I}G_T(x_T;\nu_T,\theta_T)dx_T\right] \nonumber \\
& \overset{(b)}{=} \int_0^{\infty}\left(1 - \frac{\Gamma\left(\nu_T,\frac{\beta I}{\theta_T}\right)}{\Gamma(\nu_T)}\right)G_I(x_I;\nu_I,\theta_I)dx_I \nonumber \\
& = \int_0^{\infty}G_I(x_I;\nu_I,\theta_I)dx_I - \int_0^{\infty}\frac{\Gamma\left(\nu_T,\frac{\beta I}{\theta_T}\right)}{\Gamma(\nu_T)}G_I(x_I;\nu_I,\theta_I)dx_I \nonumber \\
& \overset{(c)}{=} 1 - \frac{q^{\nu_T}\Gamma(\nu_T + \nu_I)}{\nu_I(q + 1)^{\nu_T + \nu_I}\Gamma(\nu_T)\Gamma(\nu_I)}~\F\left(1,\nu_T + \nu_I;\nu_I + 1;\frac{1}{q + 1}\right) ,\nonumber
\end{align}
where $(a)$ follows since $T\thicksim G_T(x_T;\nu_T,\theta_T)$, $(b)$ follows from substituting the definition of the incomplete gamma function \cite[Eq. 6.45]{TableIntegral} and the Gamma function of $I$. After some integral calculation, $(c)$ follows from applying \cite[Eq. 6.455.1]{TableIntegral}.
\end{proof}

As a result, we obtain the closed-form probability expression through the DGR approach for the ratio of the signal power and the interference power with the approximated Gamma distribution. Corollary 1 simplifies the calculation of the probability derived from the ratio of two Gamma variables. The applications and veracity of the Gamma approximation and DGR approach will be discussed in the following sections.

\section{Connection Outage Probability}
In this section, we will analyze the COP in our scheme based on the Gamma approximation and DGR approach, and provide its closed-form expression. 

Connection outage occurs when destination $y$ is unable to decode the signals transmitted by the relays, i.e., $SIR_y<\beta$ \cite{SOPbyOther}. The COP is given by
\begin{align}
\mathcal{P}_{to} & = \mathbb{P}\left(SIR_y<\beta\right) = \mathbb{P}\left(\frac{T(y)}{I(y)}<\beta\right) \nonumber \\
& = \mathbb{P}\left(\frac{\left|\sum_{x_R\in\Phi_R}\sqrt{P_R}\|H_{x_R,y}\|d_{x_R,y}^{-\alpha/2}\right|^2}{\sum_{x_J\in\mathcal{\overline{D}}}h_{x_J,y}d_{x_J,y}^{-\alpha}}<\beta\right) . \label{Pto}
\end{align}
According to \eqref{Ty} and \eqref{Iy} in Section II, for relay nodes at different locations, although $\|H_{x_R,y}\|$ follows the Rayleigh distribution, the values of the large scale fading $d_{x_R,y}^{-\alpha/2}$ are different. Consequently, $h_{x_R,y}d_{x_R,y}^{\alpha/2}$ with various locations $x_R$ are \emph{independent but not identically distributed}. Moreover, the means and the variances of them are random variables related to the random locations of the relays, which makes our analysis untractable. The case is similar for $I(y)$. As a result, it is untractable to obtain the accurate COP. Nevertheless, notice that both $T(y)$ and $I(y)$ are the sums of several Rayleigh or exponential random variables, which is approximated Gamma distributed. We can apply the Gamma approximation and DGR approach to facilitate our calculations.  We first model $T(y)$ and $I(y)$ using Gamma approximation.

The parameters of the PDFs of $T(y)$ (i.e. $\nu_{Ty}$, $\theta_{Ty}$) and $I(y)$ (i.e. $\nu_{Iy}$, $\theta_{Iy}$) derived from \eqref{Eti} and \eqref{vt} are given by the following proposition.

\textit{Proposition 1:} The PDFs of $T(y)$ and $I(y)$ have the parameters
\begin{align}
\nu_{Ty} = \frac{\lambda_RQ^2_y(1)}{5\lambda_RQ^2_y(1)+Q_y(2)},\quad
\theta_{Ty} = \frac{3P_R\left[5\lambda_RQ^2_y(1)+Q_y(2)\right]}{Q_y(1)},
\end{align}
\begin{align}
\nu_{Iy} = \frac{\lambda_J\Big(\int_\mathcal{\overline{D}}\frac{1}{d_{x_J,y}^{\alpha}}dx_J\Big)^2}{2\int_\mathcal{\overline{D}}\frac{1}{d_{x_J,y}^{2\alpha}}dx_J} 
,\quad\theta_{Iy} = \frac{2P_j\int_\mathcal{\overline{D}}\frac{1}{d_{x_J,y}^{2\alpha}}dx_J}{\int_\mathcal{\overline{D}}\frac{1}{d_{x_J,y}^{\alpha}}dx_J} ,\label{thetaI}
\end{align}
where $Q_y(n) = \int_{\mathcal{A}(o,L_1)}d_{x_R,y}^{-n\alpha}\mathrm{d}x_R$.
\begin{proof}
The proof of deriving $\nu_{Ty}$, $\theta_{Ty}$ and $\nu_{Iy}$, $\theta_{Iy}$ are given in Appendix A and B, respectively.
\end{proof}


\begin{figure}[!t]
\centering
\includegraphics[width=5in]{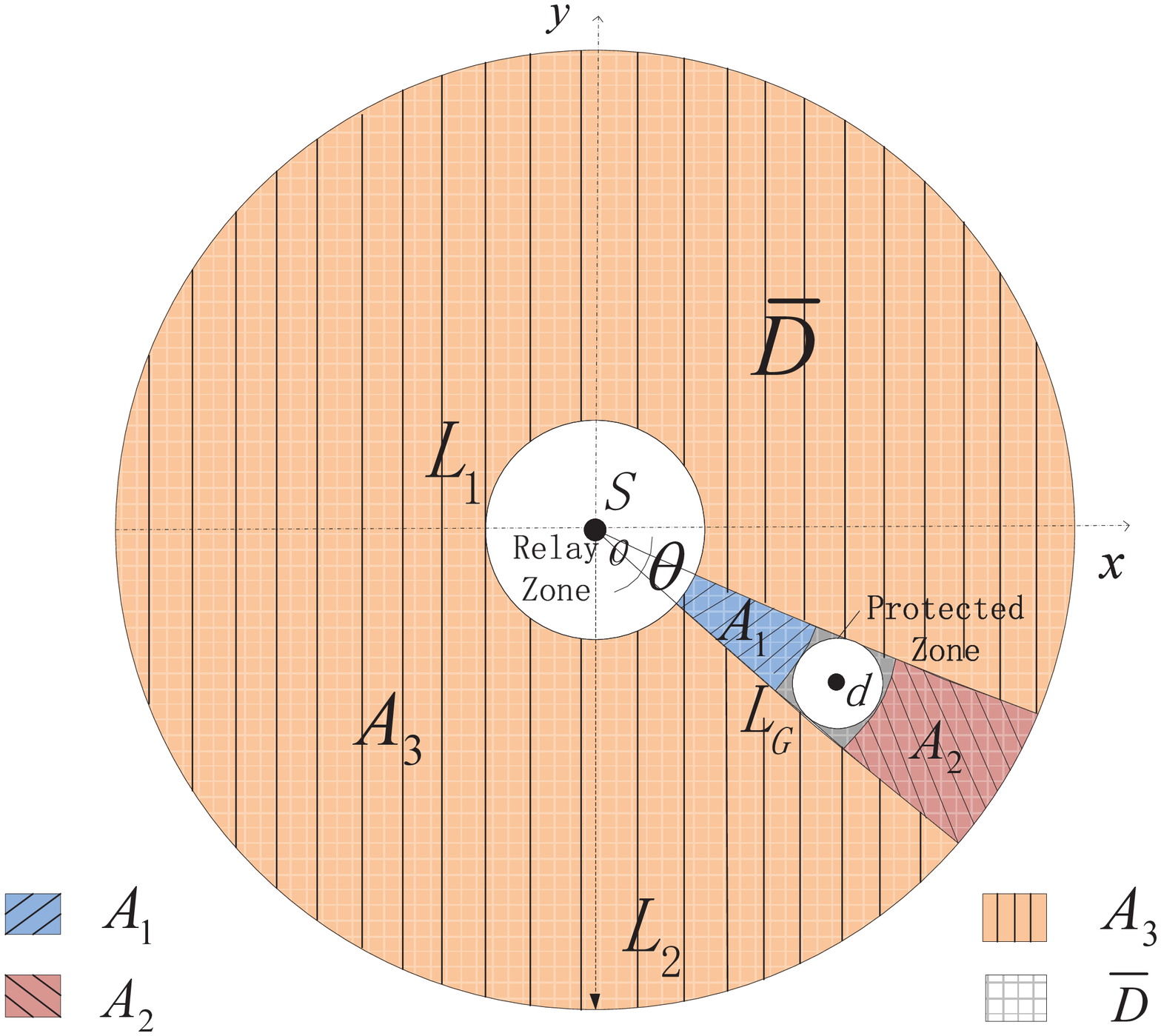}
\caption{ Integral area $\mathcal{\overline{D}}$ and the tractable approximation calculation}
\label{JamModel}
\end{figure}

Notice that due to the existence of the protected zone, the shadow integral area $\mathcal{\overline{D}}$ in \eqref{thetaI} illustrated in Fig.~\ref{JamModel} is untractable. We now propose a flabellate annulus approximation method to complete our calculations in \eqref{thetaI}.
As demonstrated in Fig.~\ref{JamModel},
the area $\mathcal{\overline{D}}$ in the shadow could be well approximated by the sum of the following three parts:
$A_1$ is the area with oblique line,  $A_2$ is in backslash and $A_3$ in vertical line is the annulus $\mathcal{D}(L_1,L_2)$ disposing the flabellate area with angles $\theta$.  Consequently, $\mathcal{\overline{D}}$ can be  approximated by the area $A_1 + A_2 + A_3$, i.e., we can use $\int_{A_1+A_2+A_3}f(x)dx$ to take the place of $\int_{\mathcal{\overline{D}}}f(x)dx$ in order to complete our integral calculations. Since $A_1$, $A_2$ and $A_3$ are flabellate annulus, it is quite convenient to calculate the integral in the area $A_1 + A_2 + A_3$. We name such a method as flabellate annulus approximation method.

According to (\ref{Gama}) and Proposition 1, the PDFs of $T(y)$ and $I(y)$ are given as
\begin{align}
f_{T(y)}(x_T;v_{Ty},\theta_{Ty}) = \frac{x_T^{v_{Ty} - 1}e^{-x_T/\theta_{Ty}}}{\theta_{Ty}^{v_{Ty}}\Gamma(v_{Ty})} \label{GamaT}
\end{align}
and
\begin{align}
f_{I(y)}(x_I;v_{Iy},\theta_{Iy}) = \frac{x_I^{v_{Iy} - 1}e^{-x_I/\theta_{Iy}}}{\theta_{Iy}^{v_{Iy}}\Gamma(v_{Iy})} ,\label{GamaI}
\end{align}
respectively.

\begin{figure}[!t]
\centering
\subfigure[Distribution of $T(y)$]{\label{TKf1}
\includegraphics[width=3.5in]{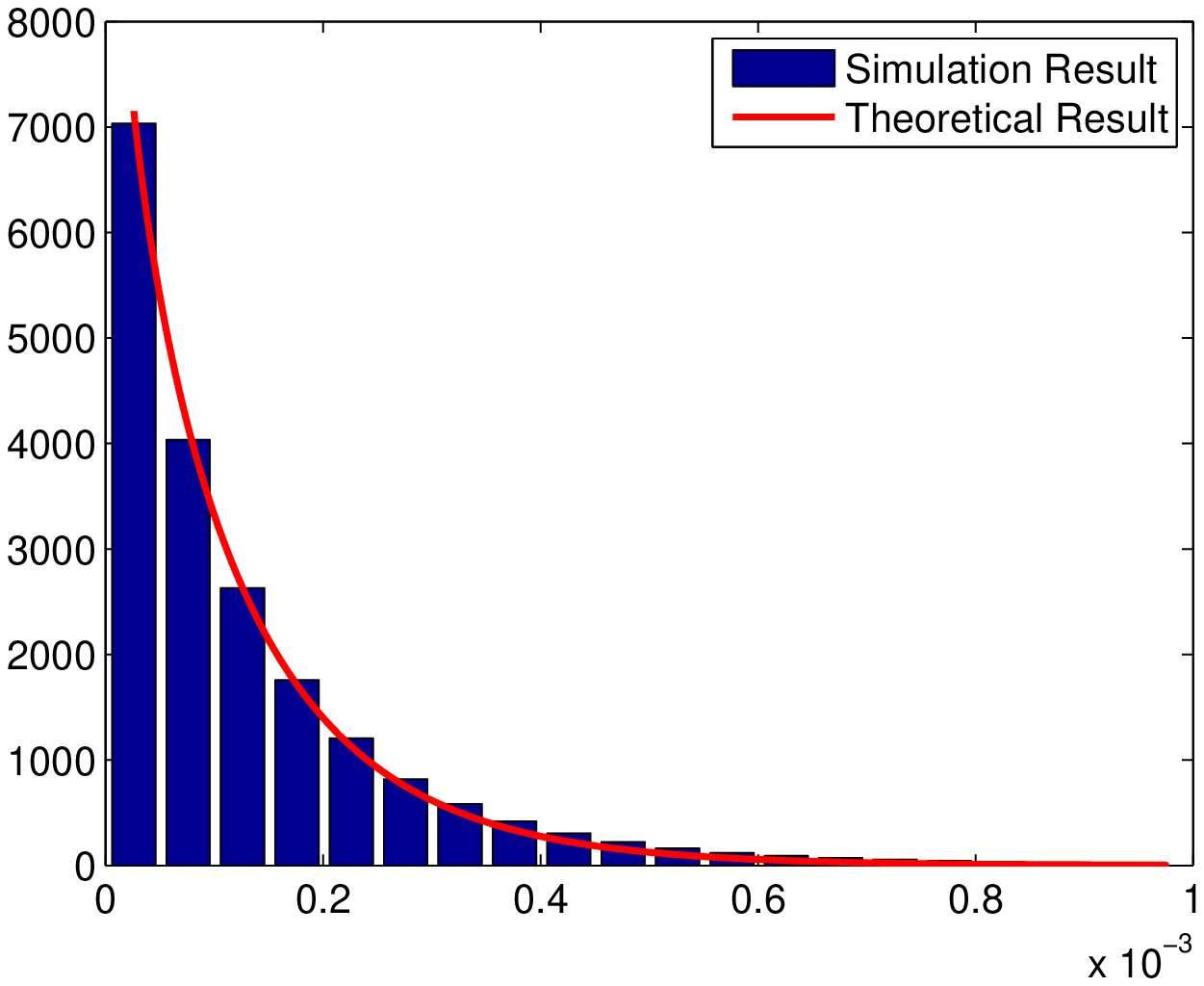}}
\subfigure[Distribution of $I(y)$]{\label{IKf1}
\includegraphics[width=3.5in]{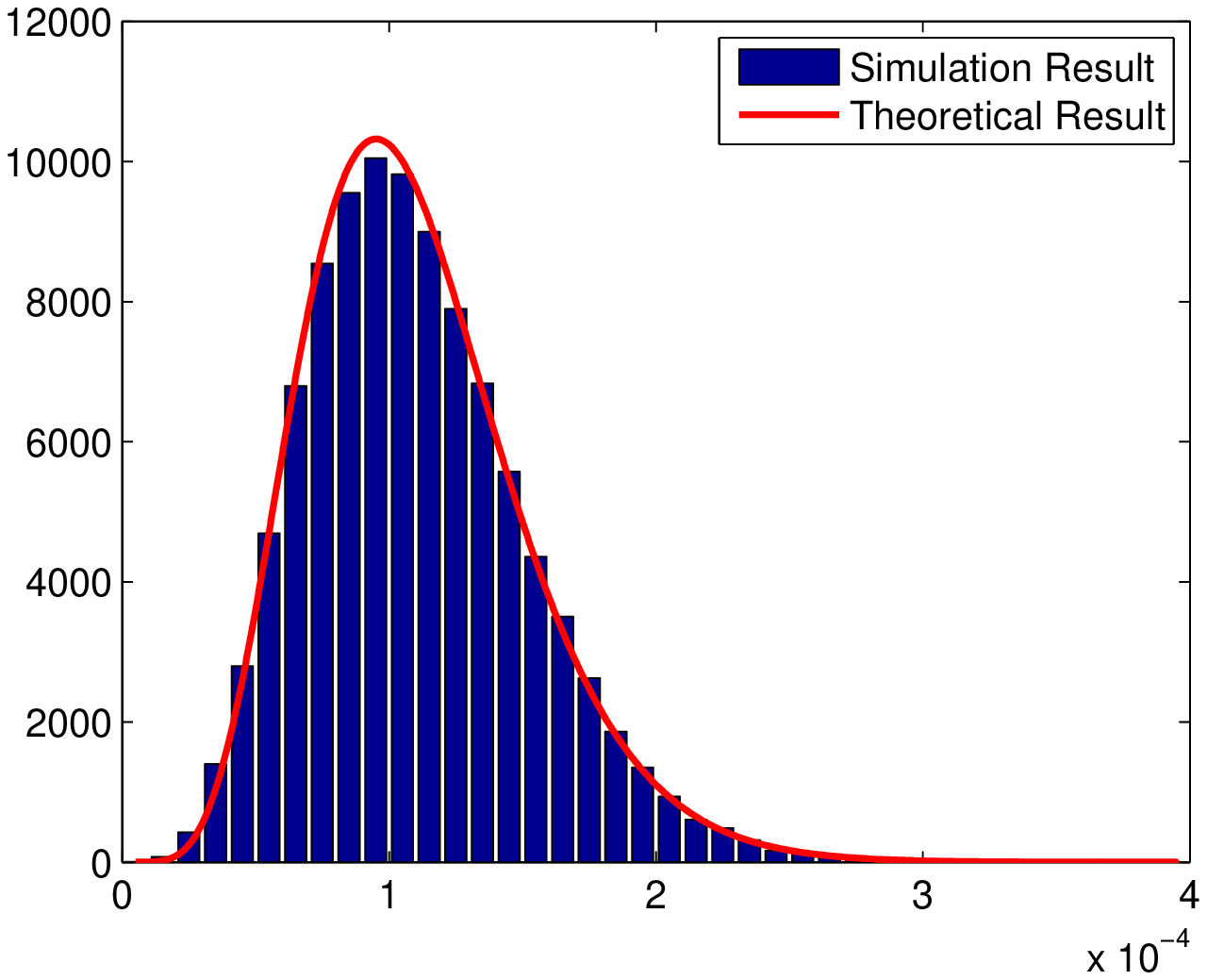}}
\caption{ Approximations based on the Gamma distribution for $T(y)$ and $I(y)$: (a) Simulation and theory results of the distribution of $T(y)$; (b) Simulation and theory results of the distribution of $I(y)$. The parameters are $\alpha=4$, $L_1=6$ m, $L_2=100$ m, $d=60$ m, $L_G=5$ m, $C_1=0.8$, $C_2=0.79$, $\lambda=0.2 /\mathrm{m}^2$, $P_R=10$ dBm, and $P_j=1$ dBm.}
\label{GammaModel}
\end{figure}

\begin{figure}[!t]
\centering
\includegraphics[width=4in]{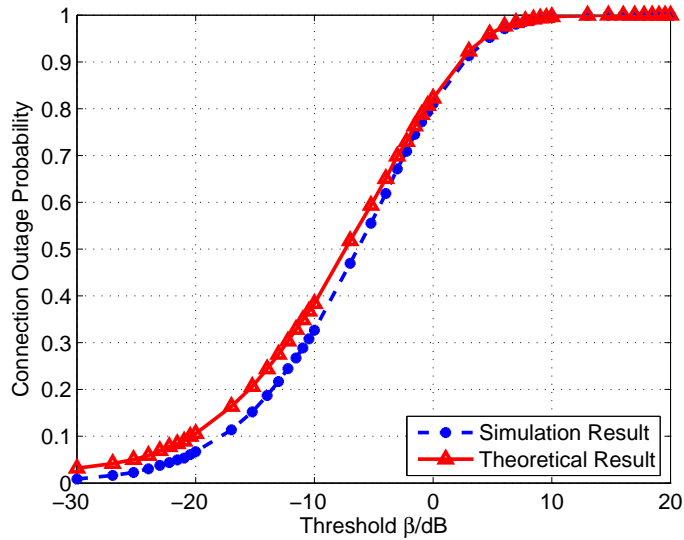}
\caption{Connection outage probability $\mathcal{P}_{to}$ vs. $\beta$ for our system, with $\alpha=4$, $L_1=6$ m, $L_2=100$ m, $d=60$ m, $L_G=5$ m, $C_1=0.8$, $C_2=0.79$, $\lambda=0.2 /\mathrm{m}^2$, $P_R=10$ dBm, and $P_j=1$ dBm.}
\label{COP1}
\end{figure}

Fig.~\ref{GammaModel} illustrates the accuracy of the Gamma approximation method. The PDFs of $T(y)$ and $I(y)$ are compared with the Gamma distribution in Fig.~\ref{TKf1} and Fig.~\ref{IKf1}, respectively. The theoretical results are derived from \eqref{GamaT} and \eqref{GamaI}. The statistical histograms are the simulation results obtained from 100,000 trials. The curves and histograms indicate that the Gamma approximation and the flabellate annulus approximation approach are quite accurate.

Now both the signal power $T(y)$ and the interference power $I(y)$ in \eqref{Pto} follow Gamma distributions \eqref{GamaT} and \eqref{GamaI}, which makes our analysis mathematically tractable. Consequently, the closed-form analytical results of COP is given as the following proposition by applying Corollary 1.

\textit{Proposition 2:} The COP in \eqref{Pto} is given by
\begin{align}
\mathcal{P}_{to} & =  1 - \frac{q_y^{\nu_{Ty}}\Gamma(\nu_{Ty} + \nu_{Iy})}{\nu_{Iy}(q_y + 1)^{\nu_{Ty} + \nu_{Iy}}\Gamma(\nu_{Ty})\Gamma(\nu_{Iy})}~\F\left(1,\nu_{Ty} + \nu_{Iy};\nu_{Iy} + 1;\frac{1}{q_y + 1}\right) ,\label{Pto2}
\end{align}
where $q_y = \frac{\beta\theta_{Iy}}{\theta_{Ty}}$, and ${}_2F_1\left(a,b;c;d\right)$ denotes hypergeometric function \cite[Eq. 6.455.1]{TableIntegral}.

\begin{proof}
The proof is similar to that in Section III.
\end{proof}

The theoretical results in \eqref{Pto2} and the simulation results are validated in Fig.~\ref{COP1}. The simulation results are calculated as the ratio of the number of connection outage to a total of 100,000 Monte Carlo trials. In Fig.~\ref{COP1}, although these two curves are not quite the same, the maximum discrepancy between them at $\beta = -19$ dB is smaller than 0.1. We can find that the theoretical results are very close to that of the numerical results, and the Gamma approximation is very accurate, so our theoretical results can be applied to analyze the COP.

\section{Secrecy Outage Probability}
In this section, we study the SOP in our scheme. We first  derive the closed-form expression of the SOP with a single eavesdropper. Then we discuss the SOP with multiple eavesdroppers following the PPP distribution and obtain its upper bound.

\subsection{Single Eavesdropper}
The SOP with a single eavesdropper is defined as the probability that the SIR achieved at the eavesdropper is larger than some threshold $\beta_e$ \cite{SOPbyOther}. Therefore, the SOP is given by
\begin{align}
\mathcal{P}_{so} & = \mathbb{P}\left(SIR_z>\beta_e\right) = \mathcal{P}\left(\frac{T(z)}{I(z)}>\beta_e\right) \nonumber \\
& = 1 - \mathbb{P}\left(\frac{\left|\sum_{x_R\in\Phi_R}\frac{\sqrt{P_R}H_{x_R,z}H_{x_R,y}^*}{\|H_{x_R,y}\|}d_{x_R,z}^{-\alpha/2}\right|^2}{\sum_{x_J\in\mathcal{\overline{D}}}P_jh_{x_J,z}d_{x_J,z}^{-\alpha}}\leq\beta_e\right) ,\label{Pso2}
\end{align}
where $T(z) = \left|\sum_{x_R\in\Phi_R}\frac{\sqrt{P_R}H_{x_R,z}H_{x_R,y}^*}{\|H_{x_R,y}\|}d_{x_R,z}^{-\alpha/2}\right|^2$. For arbitrary $x_R\in\Phi_R$, $\frac{H_{x_R,z}H_{x_R,y}^*}{\|H_{x_R,y}\|}d_{x_R,z}^{-\alpha/2}\thicksim \mathcal{CN}(0,d_{x_R,z}^{-\alpha})$ is an independent circularly symmetric complex Gaussian
distribution with variance $d_{x_R,z}^{-\alpha}$. Hence, $T(z)$ is conditionally exponential distributed with conditional mean $P_R\sum_{x_R\in\Phi_R}d_{x_R,z}^{-\alpha}$, which is a RV as well related to the locations of $x_R$ in $\Phi_R$.
Consequently, it is untractable to calculate \eqref{Pso2}. Fortunately, notice that $T(z)$ has an approximated Gamma distribution, and therefore we can use the DGR approach to facilitate mathematically tractable calculations of \eqref{Pso2}.

Firstly, we model the signal power $T(z)$ and the interference power $I(z)$ at the eavesdropper $z$ as Gamma variables. Due to the exponential distribution of $T(z)$, we can easily obtain the mean and the variance of $T(z)$ to derive the parameters of the Gamma model. As for the parameters of $I(z)$, we will apply \eqref{Eti} similarly to that of $I(y)$. The parameters of the PDFs of $T(z)$ (i.e. $\nu_{Tz}$, $\theta_{Iz}$) and $I(z)$ (i.e. $\nu_{Iz}$, $\theta_{Iz}$) are given as
\begin{align}
\nu_{Tz} = \frac{\lambda_RQ_z(1)}{\lambda_RQ^2_z(1)+2Q_z(2)},
\quad\theta_{Tz} = \frac{P_R\left[\lambda_RQ^2_z(1)+2Q_z(2)\right]}{Q_z(1)} 
\end{align}
and
\begin{align}
\nu_{Iz} = \frac{\lambda_J\big(\int_{\mathcal{\overline{D}}}\frac{1}{d_{x_J,z}^{\alpha}}dx_J\big)^2}{2\int_{\mathcal{\overline{D}}}\frac{1}{d_{x_J,z}^{2\alpha}}dx_J},
\quad\theta_{Iz} = \frac{2P_j\int_{\mathcal{\overline{D}}}\frac{1}{d_{x_J,z}^{2\alpha}}dx_J}{\int_{\mathcal{\overline{D}}}\frac{1}{d_{x_J,z}^{\alpha}}dx_J},
\end{align}
respectively, where $Q_z(n) = \int_{\mathcal{A}(o,L_1)}d_{x_R,z}^{-n\alpha}\mathrm{d}x_R$.
Then we obtain the approximated PDFs of $T(z)$ and $I(z)$ as
\begin{align}
f_{T(z)}(x_T;\nu_{Tz},\theta_{Tz}) = \frac{x_T^{\nu_{Tz} - 1}e^{-x_T/\theta_{Tz}}}{\theta_{Tz}^{\nu_{Tz}}\Gamma(\nu_{Tz})}
\quad\mathrm{and}\quad f_{I(z)}(x_I;\nu_{Iz},\theta_{Iz}) = \frac{x_I^{\nu_{Iz} - 1}e^{-x_I/\theta_{Iz}}}{\theta_{Iz}^{\nu_{Iz}}\Gamma(\nu_{Iz})},
\end{align}
respectively.
Therefore, according to Corollary 1, the closed-form analytical result of SOP is given by the following proposition.

\textit{Proposition 3:} The SOP in \eqref{Pso2} is given by
\begin{align}
\mathcal{P}_{so} & = \frac{q_e^{\nu_{Tz}}\Gamma(\nu_{Tz} + \nu_{Iz})}{\nu_{Iz}(q_e + 1)^{\nu_{Tz} + \nu_{Iz}}\Gamma(\nu_{Tz})\Gamma(\nu_{Iz})}~\F\left(1,\nu_{Tz} + \nu_{Iz};\nu_{Iz} + 1;\frac{1}{q_e + 1}\right) ,\label{Pso3}
\end{align}
where $q_e = \frac{\beta_e\theta_{Iz}}{\theta_{Tz}}$, and ${}_2F_1\left(a,b;c;d\right)$ denotes hypergeometric function \cite[Eq. 6.455.1]{TableIntegral}.

\begin{proof}
The proof is similar to that in Section III.
\end{proof}

\begin{figure}[!t]
\centering
\includegraphics[width=4in]{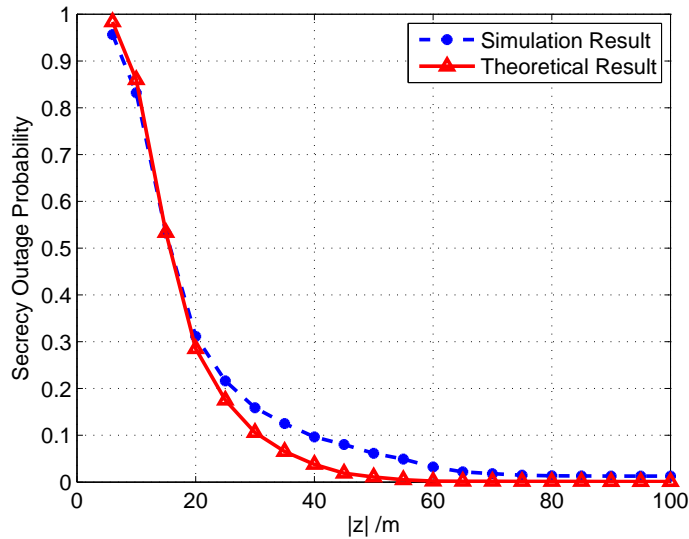}
\caption{The SOP vs. different distances $|z|$ of the eavesdropper. The system parameters are $\beta_e=0$ dB, $\alpha=4$, $L_1=6$ m, $L_2=100$ m, $d=60$ m, $L_G=5$ m, $C_1=0.8$, $C_2=0.79$, $\lambda=0.2 /\mathrm{m}^2$, $P_R=10$ dBm, and $P_j=1$ dBm.}
\label{SOP1}
\end{figure}
Fig.~\ref{SOP1} depicts the theoretical results in (\ref{Pso3}) and the simulation results of the SOP, where $|z|$ denotes the distance between the source and the eavesdropper. 100,000 Monte Carlo trials are used. From Fig.~\ref{SOP1}, we can observe that although the two curves are not quite the same, the maximum discrepancy between them at $|z| = 45$ m is smaller than 0.1. We can find that the theoretical curves coincide with the simulation ones well, and the Gamma approximation is close to our system model, which validates our theoretical results in Proposition 3.

\subsection{Multiple Eavesdroppers}
When there are multiple eavesdroppers located at the annulus $\mathcal{D}(L_1,L_2)$ in the network, we assume that they are modeled as a homogeneous PPP $\Phi_E$ with density $\lambda_e$. The SOP with multiple eavesdroppers is defined as the probability that the SIR achieved by anyone of the eavesdroppers is larger than some threshold $\beta_e$. Therefore, the SOP is given by
\begin{align}
\mathcal{P}_{so} & = \mathbb{P}\left(\cup_{z\in\Phi_E}SIR_z>\beta_e\right) \nonumber \\
& = 1 - \mathbb{P}\left(\cap_{z\in\Phi_E}SIR_z\leq\beta_e\right) \nonumber \\
& = 1 - \mathbb{E}_{\Phi_R,\Phi_J,\Phi_E}\left[\mathbb{P}\left(\cap_{z\in\Phi_E}\frac{T(z)}{I(z)}\leq\beta_e\Bigg|\Phi_R,\Phi_J,\Phi_E\right)\right] .\label{Pso}
\end{align}
From \eqref{Pso} we can see that, it is hard to obtain an exact closed-form expression of the SOP with three independent and homogeneous PPPs ($\Phi_R,\Phi_J,\Phi_E$) mathematically. To achieve tractable and accurate results, we make a compromise to obtain the upper bound of the SOP.

We first focus on the calculation over $\Phi_E$ as
\begin{align}
\mathcal{P}_{so} & = 1 - \mathbb{E}_{\Phi_R,\Phi_J,\Phi_E}\left[\prod_{z\in\Phi_E}\mathbb{P}\left(SIR_z\leq\beta_e\Big|\Phi_R,\Phi_J,\Phi_E\right)\right] \nonumber \\
& \overset{(a)}{=} 1 - \mathbb{E}_{\Phi_R,\Phi_J}\left[\exp\left(-\lambda_e\int_{\mathcal{D}(L_1,L_2)}\mathbb{P}\left(SIR_z>\beta_e\Big|\Phi_R,\Phi_J\right)dz\right)\right] \nonumber \\
& \overset{(b)}{\leq} 1 - \exp\left[-\lambda_e\int_{\mathcal{D}(L_1,L_2)}\mathbb{E}_{\Phi_R,\Phi_J}\left[\mathbb{P}\left(\frac{T(z)}{I(z)}>\beta_e\Big|\Phi_R,\Phi_J\right)dz\right]\right] ,\label{PsoJ}
\end{align}
where $(a)$ follows from the probability generating functional (PGFL) of the PPP, and $(b)$ is derived by applying the Jensen's Inequality. We define \eqref{PsoJ} as $\overline{\mathcal{P}}_{so}$, i.e., the upper bound of $\mathcal{P}_{so}$. Since $T(z) = \big|\sum_{x_R\in\Phi_R}\frac{\sqrt{P_R}H_{x_R,z}H_{x_R,y}^*}{\|H_{x_R,y}\|}d_{x_R,z}^{-\alpha/2}\big|^2$ is an exponential random variable with parameter $\sum_{x_R\in\Phi_R}P_Rd_{x_R,z}^{-\alpha}$, we can obtain
\begin{align}
\overline{\mathcal{P}}_{so} & = 1 - \exp\left[-\lambda_e\int_{\mathcal{D}(L_1,L_2)}\mathbb{E}_{\Phi_R,\Phi_J}\left[e^{-\frac{\beta_eI(z)}{\sum_{x_R\in\Phi_R}P_Rd_{x_R,z}^{-\alpha}}}\right]dz\right].
\end{align}

\begin{figure}[!t]
\centering
\includegraphics[width=4in]{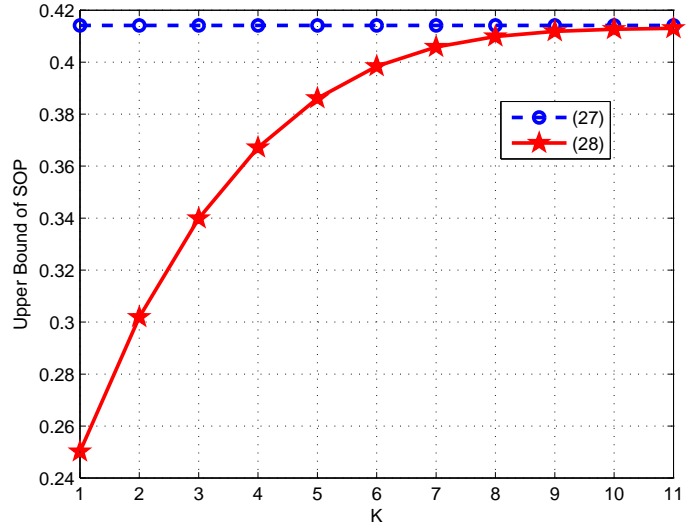}
\caption{Convergence results of \eqref{PsoUB2} for different $K$ when $\beta_e = 0$ dB. The system parameters are $\alpha=4$, $L_1=6$ m, $L_2=100$ m, $d=60$ m, $L_G=5$ m, $C_1=0.8$, $C_2=0.79$, $\lambda=0.2 /\mathrm{m}^2$, $\lambda_E=0.0005 /\mathrm{m}^2$, $P_R=10$ dBm, and $P_j=1$ dBm.}
\label{Kvalue}
\end{figure}

\begin{figure}[!t]
\centering
\includegraphics[width=4in]{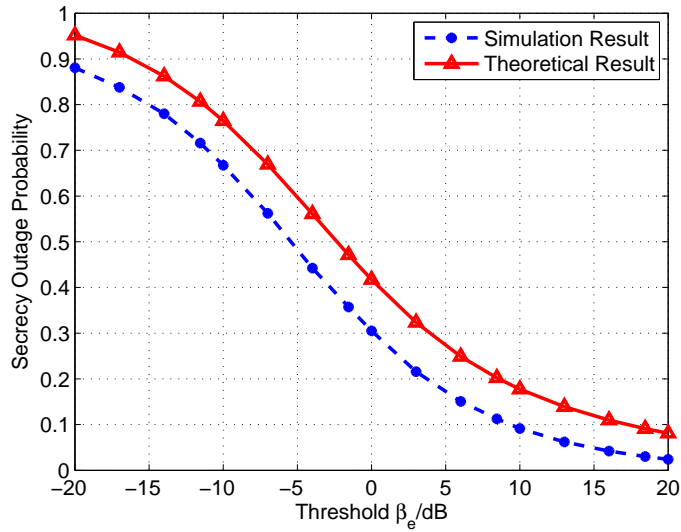}
\caption{The SOP vs. $\beta_e$ with PPP distributed eavesdroppers. The system parameters are $\alpha=4$, $L_1=6$ m, $L_2=100$ m, $d=60$ m, $L_G=5$ m, $C_1=0.8$, $C_2=0.79$, $\lambda=0.2 /\mathrm{m}^2$, $\lambda_E=0.0005 /\mathrm{m}^2$, $P_R=10$ dBm, and $P_j=1$ dBm.}
\label{SOP2}
\end{figure}

Substituting $I(z) = \sum_{x_J\in\Phi_J}P_jh_{x_J,z}d_{x_J,z}^{-\alpha}$ yields
\begin{align}
\overline{\mathcal{P}}_{so} & = 1 - \exp\left[-\lambda_e\int_{\mathcal{D}(L_1,L_2)}\mathbb{E}_{\Phi_R,\Phi_J}\left[e^{-\frac{\beta_e\sum_{x_J\in\Phi_J}P_jh_{x_J,z}d_{x_J,z}^{-\alpha}}{\sum_{x_R\in\Phi_R}P_Rd_{x_R,z}^{-\alpha}}}\right]dz\right] \nonumber \\
& = 1 - \exp\left[-\lambda_e\int_{\mathcal{D}(L_1,L_2)}\mathbb{E}_{\Phi_R,\Phi_J}\left[\prod_{x_J\in\Phi_J}e^{-\frac{\beta_eP_jh_{x_J,z}d_{x_J,z}^{-\alpha}}{\sum_{x_R\in\Phi_R}P_Rd_{x_R,z}^{-\alpha}}}\right]dz\right] \nonumber
\end{align}
\begin{align}
& \overset{(a)}{=} 1 - \exp\left[-\lambda_e\int_{\mathcal{D}(L_1,L_2)}\mathbb{E}_{\Phi_R}\left[\exp\left(-\lambda_J\int_{\mathcal{\overline{D}}}\mathbb{E}_{h_{x_J,z}}\left(1 - e^{-\frac{\beta_eP_jh_{x_J,z}d_{x_J,z}^{-\alpha}}{\sum_{x_R\in\Phi_R}P_Rd_{x_R,z}^{-\alpha}}}\right)dx_J\right)\right]dz\right] \nonumber \\
& \overset{(b)}{=} 1 - \exp\left[-\lambda_e\int_{\mathcal{D}(L_1,L_2)}\mathbb{E}_{\Phi_R}\left[\exp\left(-\lambda_J\int_{\mathcal{\overline{D}}}\frac{\beta_eP_jd_{x_J,z}^{-\alpha}}{\sum_{x_R\in\Phi_R}P_Rd_{x_R,z}^{-\alpha} + \beta_eP_jd_{x_J,z}^{-\alpha}}dx_J\right)\right]dz\right] ,\label{PsoUB1}
\end{align}
where $(a)$ follows from applying the PGFL of the PPP, since that the locations $x_J$ of the jammers are PPP distributed, and $(b)$ follows from $h_{x_J,z}\thicksim\exp(1)$.

In \eqref{PsoUB1}, there is still a PPP $\Phi_R$ for the relays, which means a triple integral is required to be performed. This leads to an unacceptable calculation burden. However, since the number of the relays is Poisson distributed, we can take the discrete expectation to approximate the continuous expectation. In this case, the law of total probability is employed for $\mathbb{E}_{\Phi_R}[\cdot]$, then the $\overline{\mathcal{P}}_{so}$ is given by
\begin{align}
1 - \exp\left[-\lambda_e\int_{\mathcal{D}(L_1,L_2)}\sum_{k=1}^K\frac{e^{-\lambda_R}\lambda_R^k}{k!}\exp\left(-\lambda_J\int_{\mathcal{\overline{D}}}\frac{\beta_eP_jd_{x_J,z}^{-\alpha}}{\sum_{x_R\in\Phi_R}P_Rd_{x_R,z}^{-\alpha} + \beta_eP_jd_{x_J,z}^{-\alpha}}dx_J\right)dz\right] ,\label{PsoUB2}
\end{align}
where $K$ is a specific number of the relays and $\frac{e^{-\lambda_R}\lambda_R^k}{k!}$ is the probability of $k$ relays. Consequently, the analysis is simplified through such an approximation.

Fig.~\ref{Kvalue} illustrates the typical convergence behavior of \eqref{PsoUB2} as a function of $K$. The dash curve representing the Monte Carlo simulation results of \eqref{PsoUB1} with 100,000 trials. The solid curve is the numerical results of \eqref{PsoUB2}. From the figure we can see that \eqref{PsoUB2} is convergent to \eqref{PsoUB1} which is the upper bound of the SOP and stabilizes after $K = 10$. The convergence rate decreases with increasing $K$. The $K$ which stabilizes \eqref{PsoUB2} is related to the value of $\frac{e^{-\lambda_R}\lambda_R^k}{k!}$.

The theoretical results in (\ref{PsoUB2}) and the simulation results are validated in Fig.~\ref{SOP2}. $K$ is set as 11 due to the analysis of convergence in Fig.~\ref{Kvalue}. From \eqref{PsoJ}, \eqref{PsoUB1} is the upper bound of the SOP due to utilizing of the Jensen's Inequality. It is obvious that \eqref{PsoUB2} convergent to \eqref{PsoUB1} is the upper bound. We observe from the figure that, the theoretical $\overline{\mathcal{P}}_{so}$ is verified by the Monte Carlo simulation and SOP in (\ref{PsoUB2}) is the upper bound. The gap between these two curves is due to the following two reasons: 1) the utilizing of the Jensen¡¯s Inequality in \eqref{PsoJ}; 2) the utilizing of the discrete expectation to approximate the continuous expectation in \eqref{PsoUB2}.

\begin{figure}[!tp]
\centering
\includegraphics[width=4in]{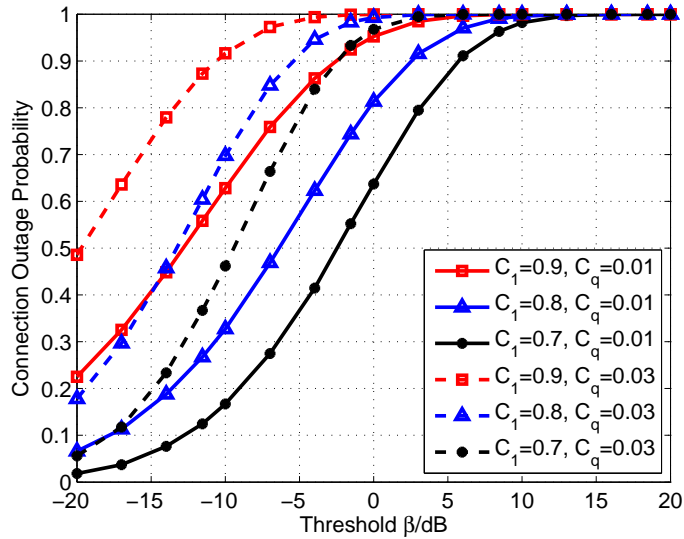}
\caption{The COP vs. $\beta$ for various social trust degrees $C_1$ and $C_q$ of the source.}
\label{top_lam_lmJ}
\end{figure}

\section{Numerical Results and Discussions}
In this section, numerical results are presented to illustrate the performance of the proposed scheme. Considering the accuracy of the Gamma approximation and the DGR approach, which have been validated in Figs.~\ref{GammaModel}--\ref{SOP1}, we only present the theoretical results based on Propositions 2 and 3 for simplicity. As the legitimate nodes in the network are categorized into relays or jammers according to their social trust degrees of the source, we firstly focus on the impacts of $C_1$ and $C_q$ on the performance of the networks, where $C_q \triangleq (C_1 - C_2)$. The benchmark scheme is that the relays transmit without the assistance of jammers, which is adopted as no jammer assistance (NJA) scheme. Then the secrecy performance is illustrated versus various values of $L_1$ and $L_G$. Finally, the upper bound versus various parameters is provided for the SOP of multiple eavesdroppers. The system parameters are set as the followings unless otherwise noted: $\alpha=4$, $L_1=6$ m, $L_2=100$ m, $d=60$ m, $L_G=5$ m, $\lambda=0.2 /\mathrm{m}^2$, $C_1 = 0.8$, $C_q = 0.01$, $P_R=10$ dBm, and $P_j=1$ dBm.

Fig.~\ref{top_lam_lmJ} illustrates the COP vs. $\beta$ for various social trust degrees $C_1$ and $C_q$ of the source. We observe that for the same $C_1$, a smaller $C_q$ leads to a lower COP. The reason is that $C_1$ and $C_q$ determine $\lambda_R$ and $\lambda_J$. A smaller $\lambda_J$ is equivalent to a lower interference power, which results in a lower COP. We also see that for the same $C_q$, the COP increases with the increasing $C_1$. This is because a smaller $\lambda_R$ produces fewer relays leading to a lower SIR.


\begin{figure}[!tp]
\centering
\subfigure[]{\label{SOP_C1}
\includegraphics[width=3.5in]{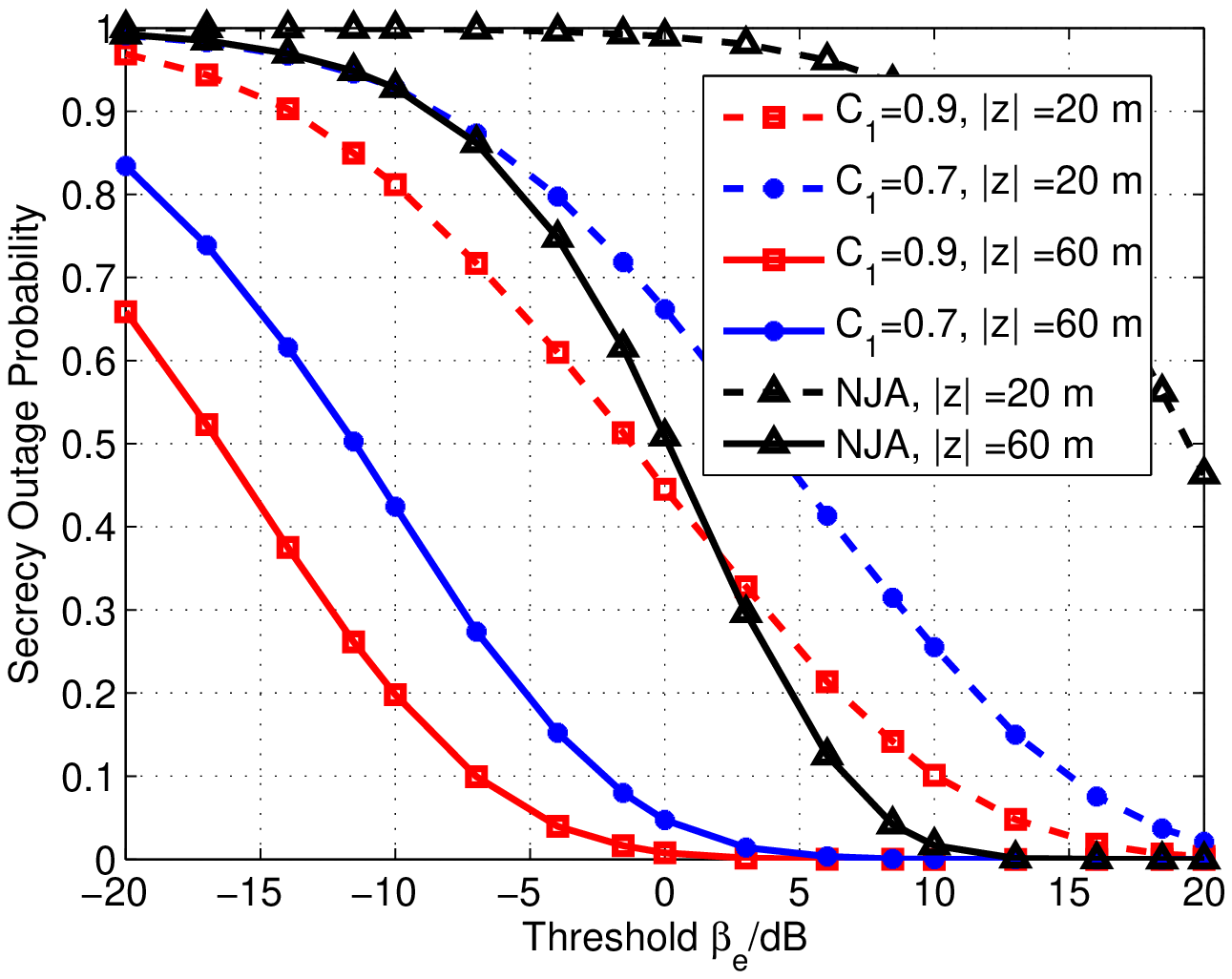}}
\subfigure[]{\label{SOP_Cq}
\includegraphics[width=3.5in]{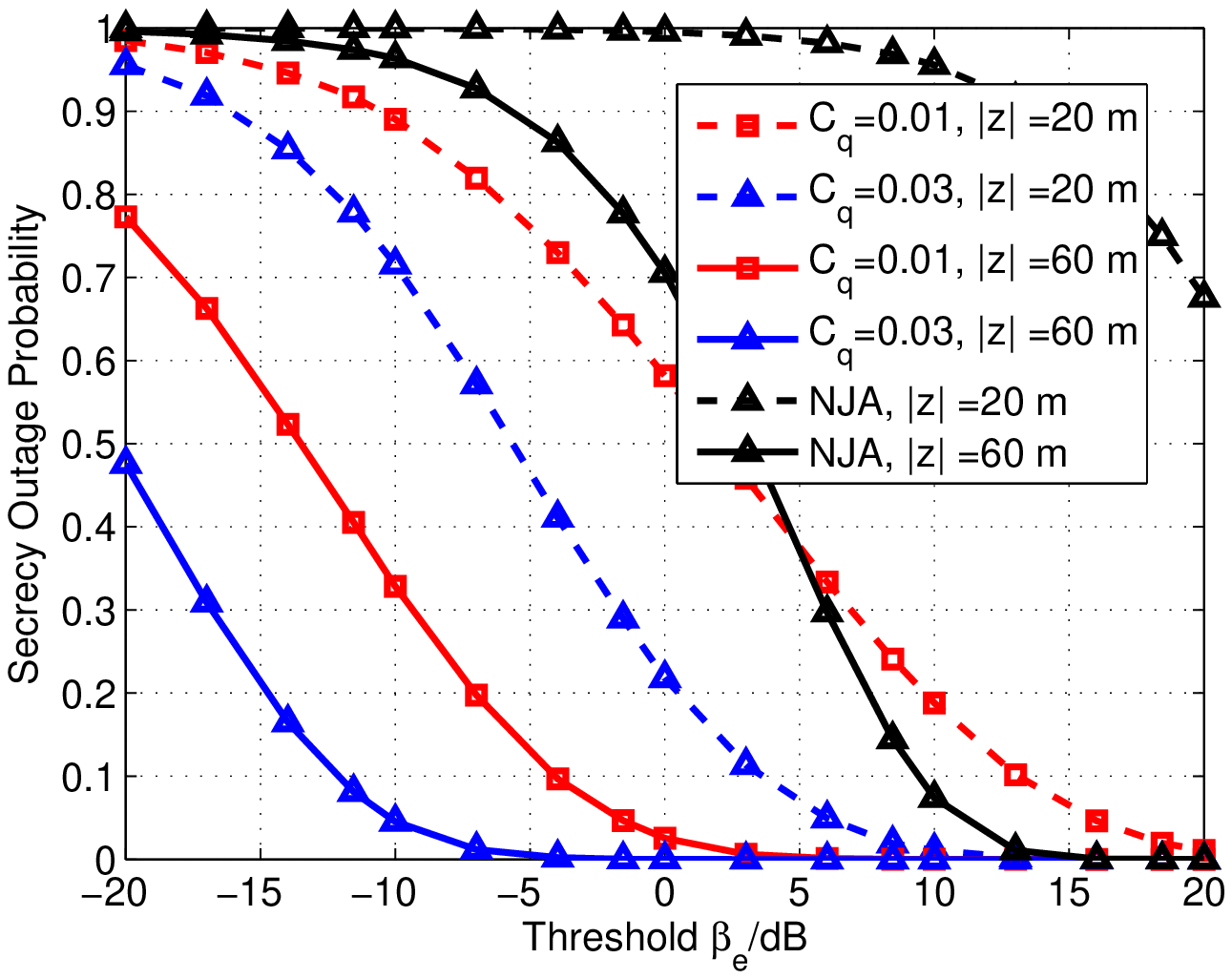}}
\caption{ The SOP of a single eavesdropper vs. $\beta_e$ for various $C_1$ and $C_q$ of the source and distances $|z|$ of the eavesdropper.: (a) SOP with different $C_1$ and $|z|$; (b) SOP with different $C_q$ and $|z|$.}
\label{SOP_CC}
\end{figure}


Fig.~\ref{SOP_C1} plots the SOP versus $\beta_e$ for various social trust degrees $C_1$ of the source and distances $|z|$ of the eavesdropper. Comparing the curves with the same $|z|$, we see that as $C_1$ increases, the SOP decreases. This is because a larger $C_1$ is equivalent to a smaller $\lambda_R$, which results in less relays and produce a lower $SIR_E$. Therefore, there is a higher probability for performing perfect secrecy, which leads to a lower SOP. We also see that the SOP decreases with increasing $|z|$ by the comparison among the curves with the same $C_1$. This is due to the fact that the secrecy outage occurs more frequently when the distance between the source and the eavesdropper decreases. From this figure, the proposed scheme has better performance than that of the NJA scheme at $|z| = 20$ m and $|z| = 60$ m, respectively. Since $C_1$ represents the trust degree of the source, we know that the most private message should be transmitted to the person with a sufficiently high trust degree in order to realize perfect secrecy.


Fig.~\ref{SOP_Cq} compares the SOP versus $\beta_e$ for various $C_q$ and $|z|$. By observing the curves with the same $|z|$, we see that as $C_q$ increases, the SOP decreases. This is because a larger $C_q$ is equivalent to a larger $\lambda_J$, which results in more jammers producing lower $SIR_E$. We also observe that the SOP dramatically decreases with increasing $|z|$. As $C_q$ determines the density of the jammers, smaller $C_2$ will lead to more jammers offering intentional interference to improve the security performance. Also, the proposed scheme performs much better than that of the NJA scheme in both cases with $|z| = 20$ m and $|z| = 60$ m. From this we know that a diminishing social trust degree will disrupt the eavesdropper more efficiently.

%


\begin{figure}[!tp]
\centering
\includegraphics[width=4in]{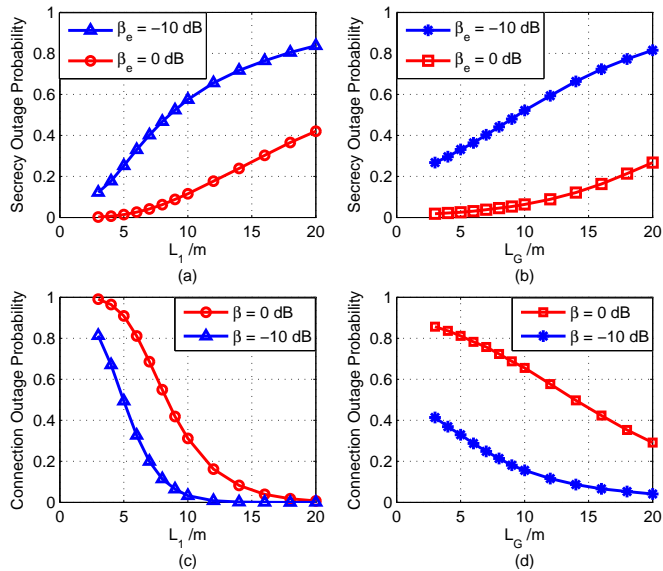}
\caption{The SOP of a single eavesdropper and the COP vs. $L_1$ and $L_G$ for different $\beta_e$ and $\beta$, respectively: (a) SOP vs. $L_1$; (b) SOP vs. $L_G$; (c) COP vs. $L_1$; (d) COP vs. $L_G$.}
\label{SOP_LL}
\end{figure}

Fig.~\ref{SOP_LL} illustrates the SOP of a single eavesdropper and the COP versus $L_1$ and $L_G$ for different $\beta_e$ and $\beta$, respectively. Fig.~\ref{SOP_LL}(a) depicts that the SOP increases with increasing $L_1$. This is because the secrecy outage occurs more frequently when $L_1$ increases producing more relays. From Fig.~\ref{SOP_LL}(b), we can observe that when $L_G$ increases, the SOP increases. This is due to the fact that a larger $L_G$ leads to less interference causing by the jammers near to the eavesdropper, which leads to a higher $SIR_E$. We also see that the SOP has dramatic increasing at a smaller threshold $\beta_e$. Similarly, the COP decreases with increasing $L_1$ and $L_G$ in Fig.~\ref{SOP_LL}(c) and Fig.~\ref{SOP_LL}(d), respectively. Such comparison on the SOP and the COP helps the trade-off between $L_1$ and $L_G$.

\begin{figure}[!tp]
\centering
\includegraphics[width=4in]{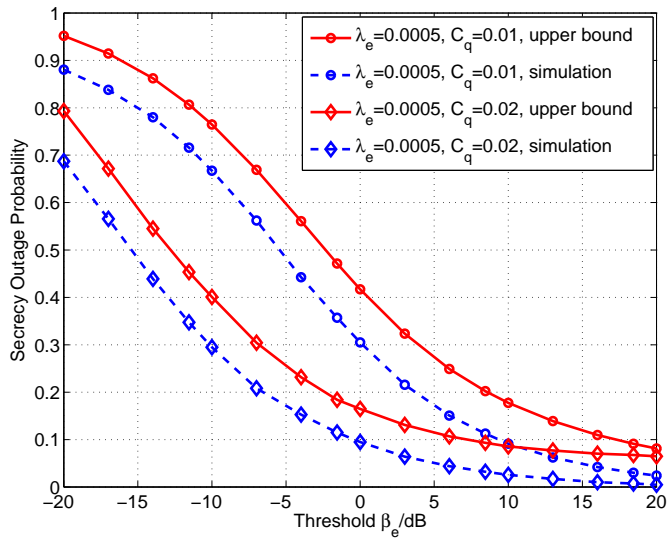}
\caption{The SOP with multiple eavesdroppers vs. $\beta_e$ for different $\lambda_e$ and $C_q$.}
\label{mSOP_add}
\end{figure}

Finally, Fig.~\ref{mSOP_add} depicts the SOP with multiple eavesdroppers versus $\beta_e$ for different $\lambda_e$ and $C_q$. The dash curves are the Monte Carlo simulation results, while the solid curves are the numerical results of \eqref{PsoUB2}. For various $\lambda_e$ and $C_q$, the numerical results are the upper bounds which validates the analysis results in \eqref{PsoUB2}. By comparing the curves with the same $C_q$, we see that the SOP increases with increasing density of the eavesdroppers. In the comparison of the curves with the same $\lambda_e$, the SOP decreases when increasing the density of the jammers, i.e., better secrecy performance can be achieved by increasing jammers.

\section{Conclusions}
In this paper, we proposed a cooperative relay and jamming scheme based on the social trust degrees to secure communications. The security performance is investigated in terms of the COP and the SOP under a stochastic geometry framework. A DGR approach was proposed to facilitate the analysis of these metrics and closed-form expressions were obtained. The simulation results highlighted that the social trust degrees have dramatic influences on the security performance in the networks. For example, the private message should be transmitted to the person with sufficiently high trust degree in order to realize secure communications, meanwhile a diminishing social trust degree will disrupt the eavesdropper more efficiently. In addition, the protected zone can protect communications efficiently when the eavesdropper is away from the source. As an extension, we further investigated the SOP in the presence of PPP distributed eavesdroppers and obtained its upper bound. Such a scenario has practical interest since can be implemented in offices, laboratories, and dormitories, where the social trust degree is employed to reflect the willingness of cooperation of the users.

\appendices
\section{Proof of $\nu_{Ty}$ and $\theta_{Ty}$ in Proposition 1}
According to \eqref{Eti} and the definition of $T(y)$ in \eqref{Ty}, the $i$-th cumulants of $T(y)$ is given by
\begin{align}
N_T^{(i)} & = \frac{d^{2i}\mathbb{E}_{\Phi_R,\|h_{x_R,y}\|}\left[e^{w\sum_{x_R\in\Phi_R}\sqrt{P_R}\|h_{x_R,y}\|d_{x_R,y}^{-\alpha/2}}\right]}{dw^{2i}}\Big|_{w = 0} .\label{ETy}
\end{align}
First, we calculate $\mathbb{E}_{\Phi_R,\|h_{x_R,y}\|}\left[e^{w\sum_{x_R\in\Phi_R}\sqrt{P_R}\|h_{x_R,y}\|d_{x_R,y}^{-\alpha/2}}\right]$ as
\begin{align}
&\quad\mathbb{E}_{\Phi_R,\|h_{x_R,y}\|}\left[e^{w\sum_{x_R\in\Phi_R}\sqrt{P_R}\|h_{x_R,y}\|d_{x_R,y}^{-\alpha/2}}\right] \nonumber \\
& = \mathbb{E}_{\Phi_R,\|h_{x_R,y}\|}\bigg[\prod_{x_R\in\Phi_R}e^{w\sqrt{P_R}\|h_{x_R,y}\|d_{x_R,y}^{-\alpha/2}}\bigg] \nonumber \\
& \overset{(a)}{=} \mathbb{E}_{\Phi_R}\bigg[\prod_{x_R\in\Phi_R}\mathbb{E}_{\|h_{x_R,y}\|}\left[e^{w\sqrt{P_R}\|h_{x_R,y}\|d_{x_R,y}^{-\alpha/2}}\right]\bigg] \nonumber \\
& \overset{(b)}{=} \mathbb{E}_{\Phi_R}\Big[\prod_{x_R\in\Phi_R}Ge^{Qw^2}\Big] \nonumber \\
& \overset{(c)}{=} \exp\left[\lambda_R\int_{\mathcal{A}_L}\Big(Ge^{Qw^2}-1\Big)dx_R\right], \label{AppA1} 
\end{align}
where $Q = \frac{1}{2}P_Rd_{x_R,y}^{-\alpha}$, $G = \Gamma(1,Qw^2)+\sqrt{Q}w\Gamma(\frac{1}{2},Qw^2)$, $\Gamma(a,b) = \int_b^{\infty}t^{a-1}e^{-t}dt$ and $\mathcal{A}_L$ denotes the area $\mathcal{A}(o,L_1)$. $(a)$ follows since $\|h_{x_R,y}\|$ is independent of $\Phi_R$. $(b)$ follows since $\|h_{x_R,y}\|$ is Rayleigh distributed, and by applying the PGFL of the PPP we can obtain $(c)$. Now \eqref{ETy} is equal to
\begin{align}
N_T^{(i)} & = 
\frac{d^{2i}\exp\left[\lambda_R\int_{\mathcal{A}_L}\Big(Ge^{Qw^2}-1\Big)dx_R\right]}{dw^{2i}}\bigg|_{w = 0} .\label{ETy2}
\end{align}
Consequently, $N_T^{(1)}$ is given by
\begin{align}
N_T^{(1)} & = \frac{d^{2}\exp\left[\lambda_R\int_{\mathcal{A}_L}\Big(Ge^{Qw^2}-1\Big)dx_R\right]}{dw^2}\Big|_{w = 0} \nonumber \\
& \overset{(a)}{=} \bigg\{\exp\left[\lambda_R\int_{\mathcal{A}_L}\Big(Ge^{Qw^2}-1\Big)dx_R\right]\left[\lambda_R\int_{\mathcal{A}_L}\left(2QwGe^{Qw^2}+4Qw\right)dx_R\right]^2 \nonumber \\
&\quad + \exp\left[\lambda_R\int_{\mathcal{A}_L}\Big(Ge^{Qw^2}-1\Big)dx_R\right]\left[\lambda_R\int_{\mathcal{A}_L}2Q\left(1+2Qw^2\right)\left(2+Ge^{Qw^2}\right)dx_R\right]\bigg\}\bigg|_{w = 0} \nonumber \\
& \overset{(b)}{=} 3\lambda_R\int_{\mathcal{A}_L}\frac{P_R}{d_{x_R,y}^{\alpha}}dx_R, \label{AppA2}
\end{align}
where $(a)$ is the second-order differential results, and let $w = 0$, we can obtain $(b)$.
Let $i = 2$, and $N_T^{(2)}$ is given by
\begin{align}
N_T^{(2)} & = \frac{d^{4}\exp\left[\lambda_R\int_{\mathcal{A}_L}\Big(Ge^{Qw^2}-1\Big)dx_R\right]}{dw^4}\Big|_{w = 0} \nonumber \\
& \overset{(a)}{=} 54\lambda_R^2\left(\int_{\mathcal{A}_L}\frac{P_R}{d_{x_R,y}^{\alpha}}dx_R\right)^2 + 9\lambda_R\int_{\mathcal{A}_L}\frac{P_R^2}{d_{x_R,y}^{2\alpha}}dx_R, \label{AppA3} 
\end{align}
where $(a)$ is the fourth-order differential results and $w = 0$. Consequently, by substituting \eqref{AppA2} and \eqref{AppA3}, $\sigma_T^2$ is given as
\begin{align}
\sigma_T^2 & = N_T^{(2)} - \left(N_T^{(1)}\right)^2 \nonumber \\
& = 45\lambda_R^2\left(\int_{\mathcal{A}_L}\frac{P_R}{d_{x_R,y}^{\alpha}}dx_R\right)^2 + 9\lambda_R\int_{\mathcal{A}_L}\frac{P_R^2}{d_{x_R,y}^{2\alpha}}dx_R. \label{AppA4}
\end{align}
As a result, we obtain $\nu_{Ty}$ and $\theta_{Ty}$ by substituting \eqref{AppA2} and \eqref{AppA4} due to \eqref{vt} as
\begin{align}
\nu_{Ty} & = \frac{\lambda_R\left(\int_{\mathcal{A}_L}d_{x_R,y}^{-\alpha}dx_R\right)^2}{5\lambda_R\left(\int_{\mathcal{A}_L}d_{x_R,y}^{-\alpha}dx_R\right)^2+\int_{\mathcal{A}_L}d_{x_R,y}^{-2\alpha}dx_R} \nonumber \\
& = \frac{\lambda_RQ^2_y(1)}{5\lambda_RQ^2_y(1)+Q_y(2)}
\end{align}
and
\begin{align}
\theta_{Ty} & = \frac{15\lambda_RP_R\left(\int_{\mathcal{A}_L}d_{x_R,y}^{-\alpha}dx_R\right)^2+3P_R\int_{\mathcal{A}_L}d_{x_R,y}^{-2\alpha}dx_R}{\int_{\mathcal{A}_L}d_{x_R,y}^{-\alpha}dx_R} \nonumber \\
& = \frac{3P_R\left[5\lambda_RQ^2_y(1)+Q_y(2)\right]}{Q_y(1)},
\end{align}
respectively.

\section{Proof of $\nu_{Iy}$ and $\theta_{Iy}$ in Proposition 1}
According to \eqref{Eti} and the definition of $I(y)$ in \eqref{Iy}, the $i$-th cumulants of $I(y)$ is given by
\begin{align}
N_I^{(i)} & = \frac{d^i\mathbb{E}_{\Phi_J,h_{x_J,y}}\left[e^{w\sum_{x_J\in\mathcal{\overline{D}}}P_jh_{x_J,y}d_{x_J,y}^{-\alpha}}\right]}{dw^i}\Big|_{w = 0} .\label{EIy}
\end{align}
Similar to the derivation of \eqref{AppA1}, $\mathbb{E}_{\Phi_J,h_{x_J,y}}\left[e^{w\sum_{x_J\in\mathcal{\overline{D}}}P_jh_{x_J,y}d_{x_J,y}^{-\alpha}}\right]$ is given as
\begin{align}
&\quad \mathbb{E}_{\Phi_J,h_{x_J,y}}\left[e^{w\sum_{x_J\in\mathcal{\overline{D}}}P_jh_{x_J,y}d_{x_J,y}^{-\alpha}}\right] = \exp\left(\lambda_J\int_{\mathcal{\overline{D}}}\frac{wP_j}{d_{x_J,y}^{\alpha} - wP_j}dx_J\right). \label{AppB1}
\end{align}
Consequently, substituting \eqref{AppB1} into \eqref{EIy}, $N_I^{(1)}$ is given by
\begin{align}
N_I^{(1)} & = \frac{d\Big(\exp\big(\lambda_J\int_{\mathcal{\overline{D}}}\frac{wP_j}{d_{x_J,y}^{\alpha} - wP_j}dx_J\big)\Big)}{dw}\Big|_{w = 0} \nonumber \\
& \overset{(a)}{=} \exp\bigg(\lambda_J\int_{\mathcal{\overline{D}}}\frac{wP_j}{d_{x_J,y}^{\alpha} - wP_j}dx_J\bigg)\lambda_J\int_{\mathcal{\overline{D}}}\frac{P_jd_{x_J,y}^{\alpha}}{\left(d_{x_J,y}^{\alpha} - wP_j\right)^2}dx_J\bigg|_{w = 0} \nonumber \\
& \overset{(b)}{=} \lambda_J\int_{\mathcal{\overline{D}}}\frac{P_j}{d_{x_J,y}^{\alpha}}dx_J, \label{AppB2}
\end{align}
where $(a)$ is the derivation results, and let $w=0$, we have $(b)$. Let $i = 2$, $N_I^{(2)}$ is given as
\begin{align}
N_I^{(2)} & = \frac{d^2\Big(\exp\big(\lambda_J\int_{\mathcal{\overline{D}}}\frac{wP_j}{d_{x_J,y}^{\alpha} - wP_j}dx_J\big)\Big)}{dw^2}\Big|_{w = 0} \nonumber \\
& \overset{(a)}{=} \bigg[\exp\left(\lambda_J\int_{\mathcal{\overline{D}}}\frac{wP_j}{d_{x_J,y}^{\alpha} - wP_j}dx_J\right)\lambda_J^2\left(\int_{\mathcal{\overline{D}}}\frac{P_jd_{x_J,y}^{\alpha}}{\left(d_{x_J,y}^{\alpha} - wP_j\right)^2}dx_J\right)^2 \nonumber \\
&\quad + \exp\left(\lambda_J\int_{\mathcal{\overline{D}}}\frac{wP_j}{d_{x_J,y}^{\alpha} - wP_j}dx_J\right)\lambda_J\int_{\mathcal{\overline{D}}}\frac{2\left(d_{x_J,y}^{\alpha} - wP_j\right)P_j^2d_{x_J,y}^{\alpha}}{\left(d_{x_J,y}^{\alpha} - wP_j\right)^4}dx_J\bigg]\bigg|_{w = 0} \nonumber \\
& \overset{(b)}{=} \lambda_J^2\bigg(\int_{\mathcal{\overline{D}}}\frac{P_j}{d_{x_J,y}^{\alpha}}dx_J\bigg)^2 + 2\lambda_J\int_{\mathcal{\overline{D}}}\frac{P_j^2}{d_{x_J,y}^{2\alpha}}dx_J, \label{AppB3}
\end{align}
where $(a)$ is the second-order differential results. Let $w = 0$, and we can obtain $(b)$. As a result, by substituting \eqref{AppB2} and \eqref{AppB3}, $\sigma_I^2$ is given by
\begin{align}
\sigma_I^2 & = N_I^{(2)} - \left(N_I^{(1)}\right)^2 \nonumber \\
& = \lambda_J^2\bigg(\int_{\mathcal{\overline{D}}}\frac{P_j}{d_{x_J,y}^{\alpha}}dx_J\bigg)^2 + 2\lambda_J\int_{\mathcal{\overline{D}}}\frac{P_j^2}{d_{x_J,y}^{2\alpha}}dx_J - \bigg(\lambda_J\int_{\mathcal{\overline{D}}}\frac{P_j}{d_{x_J,y}^{\alpha}}dx_J\bigg)^2 \nonumber \\
& = 2\lambda_J\int_{\mathcal{\overline{D}}}\frac{P_j^2}{d_{x_J,y}^{2\alpha}}dx_J. \label{AppB4}
\end{align}

As a result, we obtain $\nu_{Iy}$ and $\theta_{Iy}$ by substituting \eqref{AppB2} and \eqref{AppB4} due to \eqref{vt} as
\begin{align}
\nu_{Iy} = \frac{\lambda_J\Big(\int_\mathcal{\overline{D}}\frac{1}{d_{x_J,y}^{\alpha}}dx_J\Big)^2}{2\int_\mathcal{\overline{D}}\frac{1}{d_{x_J,y}^{2\alpha}}dx_J}\quad\mathrm{and}\quad
\theta_{Iy} = \frac{2P_j\int_\mathcal{\overline{D}}\frac{1}{d_{x_J,y}^{2\alpha}}dx_J}{\int_\mathcal{\overline{D}}\frac{1}{d_{x_J,y}^{\alpha}}dx_J},
\end{align}
respectively.


\linespread{1.7}

\end{document}